\def\WR{\stackrel{wr}{\to}}
\def\VORW{\stackrel{\ll(rw)}{\to}}
\def\VOWW{\stackrel{\ll(ww)}{\to}}
\newtheorem{counter}{counter}
\newtheorem{definition}[counter]{Definition}
\newtheorem{observation}[counter]{Observation}
\newtheorem{proposition}[counter]{Proposition}
\newtheorem{theorem}[counter]{Theorem}
\newtheorem{lemma}[counter]{Lemma}
\begin{document}


\title{NWR: Rethinking Thomas Write Rule for Omittable Write Operations}



%
%
%
%

\numberofauthors{1} 

\author{
%
%
       \alignauthor
       Sho Nakazono\footnotemark[1],
       Hiroyuki Uchiyama\footnotemark[1],
       Yasuhiro Fujiwara\footnotemark[2],
       Yasuhiro Nakamura\footnotemark[4],
       Hideyuki Kawashima\footnotemark[5]
\\
       \affaddr{
              \footnotemark[1]\ \,NTT Software Innovation Center\\
              \footnotemark[2]\ \,NTT Communication Science Laboratories\\
              \footnotemark[4]\ \,Graduate School of Systems and Information Engineering, University of Tsukuba\\
              \footnotemark[5]\ \,Faculty of Environment and Information Studies, Keio University\\
       }
}



\maketitle
\renewcommand\thefootnote{\fnsymbol{footnote}}	
\footnotetext[4]{Now works at Yahoo Japan Corporation}
\renewcommand\thefootnote{\arabic{footnote}}

\begin{abstract}
    Concurrency control protocols are the key to scaling current DBMS performances.
They efficiently interleave read and write operations in transactions, but occasionally they restrict concurrency by using coordination such as exclusive lockings.
Although exclusive lockings ensure the correctness of DBMS, it incurs serious performance penalties on multi-core environments.
In particular, existing protocols generally suffer from emerging highly write contended workloads, since they use innumerable lockings for write operations.
In this paper, we rethink the Thomas write rule (TWR), which allows the timestamp ordering (T/O) protocol to omit write operations without any lockings.
We formalize the notion of omitting and decouple it from the T/O protocol implementation, in order to define a new rule named non-visible write rule (NWR).
When the rules of NWR are satisfied, any protocol can in theory generate omittable write operations with preserving the correctness without any lockings.
In the experiments, we implement three NWR-extended protocols: Silo+NWR, TicToc+NWR, and MVTO+NWR.
Experimental results demonstrate the efficiency and the low-overhead property of the extended protocols.
We confirm that NWR-extended protocols achieve more than 11x faster than the originals in the best case of highly write contended YCSB-A and comparable performance with the originals in the other workloads.

\end{abstract}

\section{Introduction}
Thanks to hundreds of CPU cores and terabytes of RAM on the market, DBMS’s performance bottleneck has shifted from storage I/O to the concurrent processing \cite{Harizopoulos2008OLTPThere}, and thus concurrency control protocols are the key to scaling DBMS performance.
They efficiently interleave read and write operations in transactions; however, they occasionally restrict concurrency by using coordination such as exclusive locking.
Minimizing the number of lockings is a notoriously difficult problem.
Although lockings incurs serious performance penalties, concurrency control protocols cannot avoid to use lockings for preserving \textit{strict serializability} \cite{Herlihy1990Linearizability:Objects} and \textit{recoverability} \cite{Hadzilacos1988ASystems}, the important correctness properties of DBMS.
A number of studies \cite{Lim2017Cicada:Transactions,Kim2016Ermia:Workloads, Yu2016Tictoc:Control,Wang2017EfficientlySerializable,Tu2013SpeedyDatabases,Durner2019NoFalseNegatives,Ding2018ImprovingOptimistic} have addressed this trade-off and succeeded in reducing the number of lockings for read contentions, especially on current industrial standard benchmarks (e.g., TPC-C).

However, recent protocols suffer from emerging workloads on IoT applications and cloud services since their workloads exhibit tremendous write contentions; they frequently collect data from various sources outside DBMS such as sensors, servers, browsers, and mobile devices, and they blindly update shared states simultaneously.
Write contentions in such workloads severely limit the scalability on multi-core processors.
This is because common concurrency control protocols use one exclusive locking per one write operation, and process contended transactions as serial execution.
In fact, it has been known that the performance of recent protocols degrades as the level of write contention increases \cite{Wu2017AnControl, Yu2014StaringCores, Kim2016Ermia:Workloads, Fan:2019:OVG:3342263.3360357}.

Is it possible to reduce the number of lockings for write contentions?
The answer is \textbf{yes}; we show the answer to the question in this paper.
For an intuitive understanding of our approach, we revisit the Thomas write rule (TWR) \cite{Thomas1977ABases} proposed in 1977.
With TWR, we can execute multiple write operations with a single locking.
TWR is a well-known but antiquated optimization rule for the single-version timestamp ordering (T/O) protocol.
If a write operation encounters another one that has a larger timestamp, TWR regards the former is obsolete and \textit{``omittable''}; the execution and locking of the former are unnecessary.
Even when tremendous write operations for the same data item are running concurrently, we need only a single locking for the single operation with the largest timestamp.

The limitations of TWR are, however, widely known.
First, T/O with TWR incurs unacceptable results for users.
This is because T/O with TWR does not ensure both strict serializability and recoverability as the correctness properties.
Second, TWR is not applicable to the other protocols.
The rule of TWR assumes to use per-transaction, totally-ordered, and monotonically-increasing timestamps; however, recent protocols have not always such timestamps.
T/O with TWR is not in the choices of current DBMSs nowadays, and TWR has not been well studied to apply the concept of omittable for recent protocols.

\textbf{Our goal is to construct an easily-applicable rule to omit write operations with preserving the correctness.}
In this paper, we rethink the formal aspects of omittability.
Although TWR is recognized as the rule for the single-version T/O protocol, we analyze it in terms of the theory of multi-version protocols.
The concept of \textit{version orders} \cite{Bernstein1983MultiversionAlgorithms} in multiversion serializability theory is essential to provide both the theoretical analysis and the applicable approach.
For theoretical analysis, version orders enable us to formalize the definition of omittability without the T/O's timestamps.
As a result, we found that an omittable write operation preserves the correctness only if there exists a version order reverse to the operation order.
Version orders also play a crucial role in the applicability.
It derives the idea that any protocol can be extended to omit write operations by \textbf{validating an additional (reverse) version order}.
Generally, concurrency control protocols generate and validate only a single version order, which is tightly coupled to the implementation of each protocol; however, any protocol can in theory generate and validate an arbitrary number of version orders.
Hence, by adding a version order reverse to the operation order, any protocol has the possibility to omit write operations.

Based on the findings, we construct a new rule for any protocol: \textbf{non-visible write rule (NWR)}.
The main idea of NWR is to extend existing protocols to \textbf{NWR-extended protocols}, that validate an additional version order.
If the additional version order is validated and the rules of NWR are satisfied, NWR-extended protocols generate omittable write operations with the guarantee of strict serializability and recoverability.
Our approach is extensible for any protocol since we add the version order and its validation, not depending on any baseline implementation.

To illustrate the effectiveness of our approach, we show three NWR-extended protocols: Silo+NWR \cite{Tu2013SpeedyDatabases}, TicToc+NWR \cite{Yu2016Tictoc:Control}, and MVTO+NWR \cite{Reed1978NamingSystem., Bernstein1982ConcurrencySystems}.
All of our extensions are implemented in the same way; we embed information for the additional validation (the additional version order and read/write sets of committed transactions) for each data item as a 128-bits data structure.
On 64-bits machines with double-word atomic operations, an omittable write operation is executed by a single atomic operation.
Compared to the conventional approach with exclusive lockings, our approach promises surprisingly high concurrency and scalability.
In our experiments, Silo+NWR achieves up to 11x faster than the original Silo under write contended workloads.
Meanwhile, every NWR-extended protocol has comparable performance with the original, even in the other workloads such as TPC-C in which there are no omittable write operations.

The rest of this paper is organized as follows: we rethink and formalize the definition of TWR, and discuss the formal aspects of ``omittable'' write operations (Section \ref{sec:formal_aspects_of_iw}).
We introduce our approach of validating the additional version order for generating omittable write operations and propose NWR, the rule for a version order to preserve the correctness (Section \ref{sec:design_overview}).
We show the implementation detail of the proposed approach (Section \ref{sec:implementation}). 
We examine three NWR-extended protocols and confirm that every extended protocol shows a significant performance gain in write-contended workloads (Section \ref{sec:evaluation}). 

\section{Preliminaries}
\label{sec:preliminaries}

\begin{table}[t]
  \small
  \begin{tabular}{|c|l|} \hline
      Symbol & Definition \\ \hline \hline
    $D$ & a set of distinct finite data items ($x, y, z,...$) \\
    $t_i$ & $i$-th transaction. an ordered set of operations \\
    $x_i$ & a version of a data item $x$\\
    $w_i(x_i)$ & a write operation. $t_i$ writes a version $x_i$ \\
    $r_i(x_j)$ & a read operation. $t_i$ reads $x_j$ \\
    $c_i$ & a termination operation which commits $t_i$ \\
    $a_i$ & a termination operation which aborts $t_i$ \\
    $rs_i$ & a set of versions read by $t_i$ \\
    $ws_i$ & a set of versions written by $t_i$ \\
    $\{x\}$ & a set of versions of a data item $x$ \\
    $S$ & a schedule. \footnotemark[1] \\
    $CP(S)$ & returns committed projection of $S$ \\
    $trans(S)$ & returns a set of transactions in $S$ \\
    $\ll$ & a version order for a schedule. \footnotemark[2] \\
    \hline
  \end{tabular}
  \caption{Definitions of main symbols}
  \label{tab:symbols}
\end{table}

\footnotetext[1]{
     A schedule is a pair of operations and $<_s$ where $<_s$ is an order for all pairs of operations from distinct transactions that access the same data item and have at least one write operation \cite{Weikum2001TransactionalRecovery}.
}
\footnotetext[2]{
  A version order for $x$ is any non-reflexive and total ordering of $\{x\}$.
  A version order for a schedule is the union of all version order for data items written by operations in the schedule
  \cite{Weikum2001TransactionalRecovery}.
}

In this paper, we mainly use the notations defined by Weikum and Vossen \cite{Weikum2001TransactionalRecovery}.
Table \ref{tab:symbols} shows definitions of main symbols.
Note that $rs_i$ (read set) and $ws_i$ (write set) are available for all protocols by adding the corresponding sets.
We use symbols and notations of multi-version concurrency control throughout this paper.
Without loss of generality, we can assume that all data items have multiple versions, and all schedules are multi-version schedules \cite{Papadimitriou1986TheControl, Weikum2001TransactionalRecovery, Bernstein1987ConcurrencySystems}.

We consider \textbf{the correctness} of DBMS in the aspects of strict serializability \cite{Herlihy1990Linearizability:Objects} and recoverability \cite{Hadzilacos1988ASystems}.
Strictly serializable protocols always generate schedules equivalent to some serial executions in a single thread on a single machine.
Recoverable protocols ensure that DBMS can always be recovered without any dirty-read \cite{10.1145/223784.223785} anomaly whenever some aborts or crashes happen.
In the following, we explain the details of strict serializability.
The formal definition of recoverability is given in Appendix \ref{sec:proof}. 

\subsection{Serializability and its graph}
\label{sec:def_of_mvsg}
To explain strict serializability, we first introduce serializability.
For a schedule $S$, multiversion view serializability (MVSR) \cite{Papadimitriou1982OnVersions} is defined that $S$ is serializable iff $S$ is \textit{view equivalent} with some serial schedules.
To obtain the view-equivalent serial schedules, Bernstein et al. proposed the multiversion serialization graph (MVSG) \cite{Bernstein1983MultiversionAlgorithms} and prove that $S$ is MVSR iff there exists acyclic MVSG.
An MVSG is generated based on a schedule $S$ and a version order $\ll$.
``$x_i <_v x_j$'' denotes that a version $x_i$ precedes another version $x_j$ in $\ll$.
For given $S$ and $\ll$, an $MVSG(S, \ll)$ has nodes for $trans(CP(S))$ and edges as the followings:
for distinct operations $w_j(x_j)$, $r_i(x_j)$, and $w_k(x_k)$ where $t_i \ne t_k$ and $t_j, t_i, t_k$ in $trans(CP(S))$,
\begin{itemize}
  \setlength\itemsep{0.14em}
\item $t_j \WR{} t_i$
\item If $x_j <_v x_k$  then $t_i \VORW{} t_k$
\item If $x_j >_v x_k$  then $t_k \VOWW{} t_j$
\end{itemize}
We denote $\WR{}$, $\VORW{}$, and $\VOWW{}$ as edges of MVSG and $\to$ as any of edges.
Note that all serializable protocols have their version orders generated by their implementation, such as lockings, timestamps, or transaction ids.
For $MVSG(S, \ll)$, we define a function $RN(t_i)$ which accepts a transaction $t_i$ in $trans(CP(S))$ and returns a set of transactions that includes reachable transactions from $t_i$ on MVSG.

\subsection{Strict serializability}
Strict serializability \cite{Herlihy1990Linearizability:Objects, Fan:2019:OVG:3342263.3360357,10.14778/3342263.3342647} is the property of serializable schedules.
Intuitively, it is the wall-clock ordering constraints among transactions.
In general, distributed concurrency control protocols promise strict serializability for preventing \textit{stale reads}, that read too old versions.
This property is also essential for non-distributed protocols to prevent \textit{stale writes}, that write too old versions.
For instance, a serializable protocol allows all transactions to read and write into an arbitrary version.
A transaction can always read the initial values written by $t_0$ even if there exist some newer versions, and also can always write older versions than the initial values such as $x_j <_v x_0$.
Even if several years may have passed after the initialization of DBMS, not strictly serializable protocols allow such unacceptable results.
To satisfy strict serializability, we must forbid protocols from reordering among operations or versions when these transactions are not concurrent.
A more formal definition of this property is described in Appendix \ref{sec:proof}.

\section{Omittable Write Operations}
\label{sec:formal_aspects_of_iw}
In this section, we first rethink the formal aspects of omittable write operations allowed by TWR.
TWR does not have any formal definitions and propositions for the correctness but explained as follows \cite{Bernstein:2009:PTP:1208930}:

\begin{figure*}[t]
  \centerline{
    \subfloat[$x_1 <_v x_3$. Everyone allows]{
      \includegraphics[clip, width=0.25\textwidth]{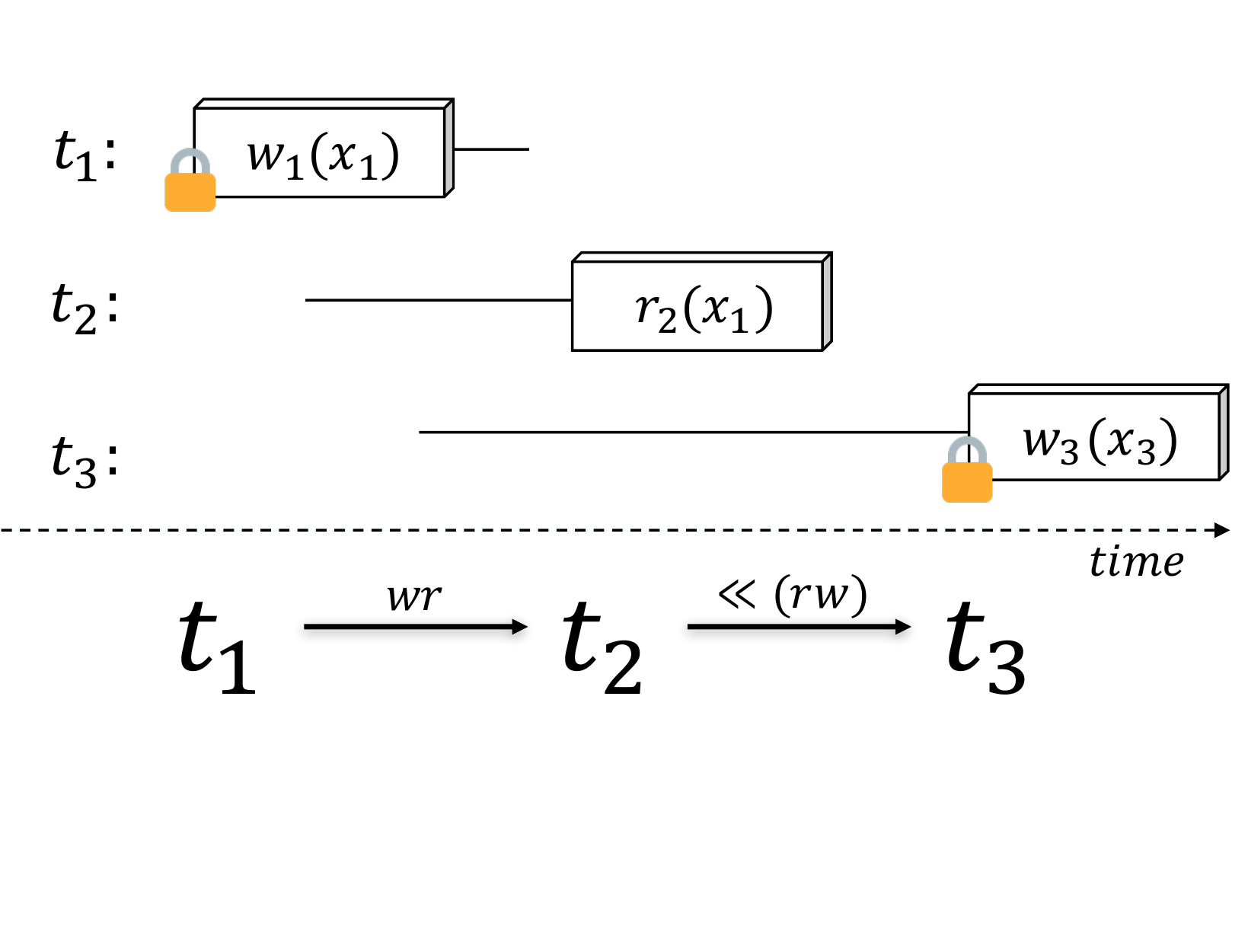}
    }
    \hfill
    \subfloat[$x_1 <_v x_3$. TWR allows]{
      \includegraphics[clip,width=0.25\textwidth]{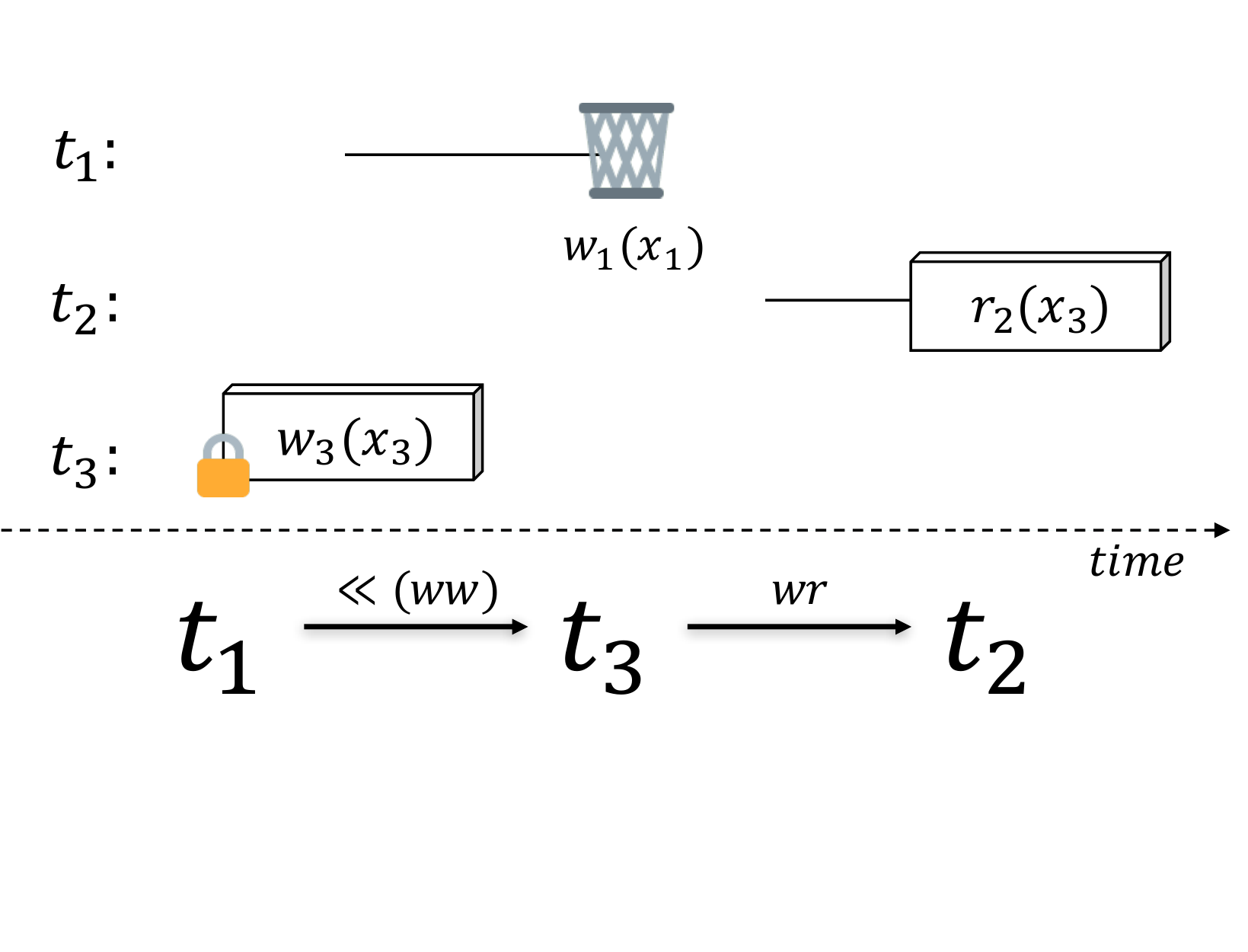}
    }
    \hfill
    \subfloat[$x_1 <_v x_3$. TWR forbids]{
      \includegraphics[clip,width=0.25\textwidth]{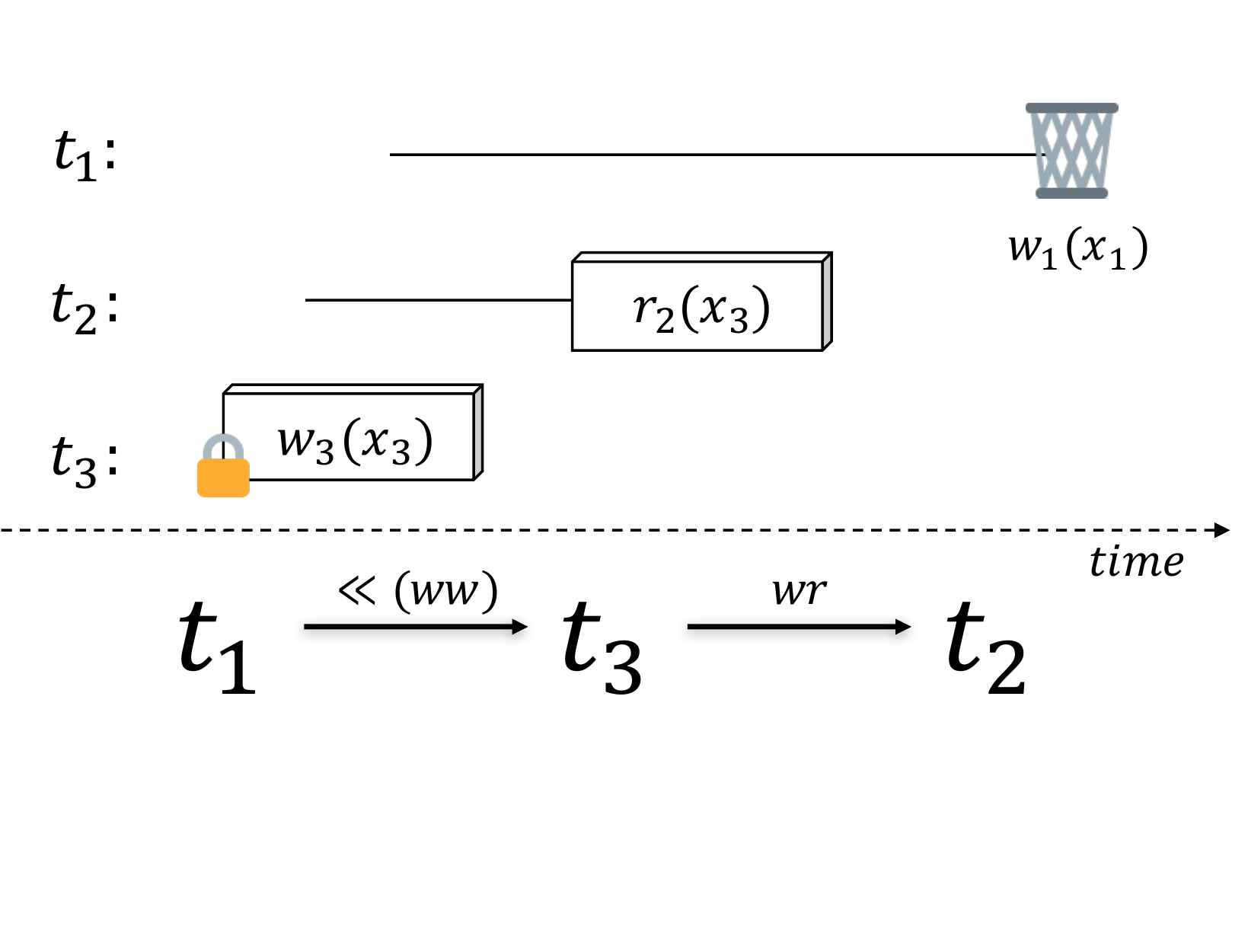}
    }
    \hfill
    \subfloat[$x_3 <_v x_1$. TWR forbids]{
      \includegraphics[clip,width=0.25\textwidth]{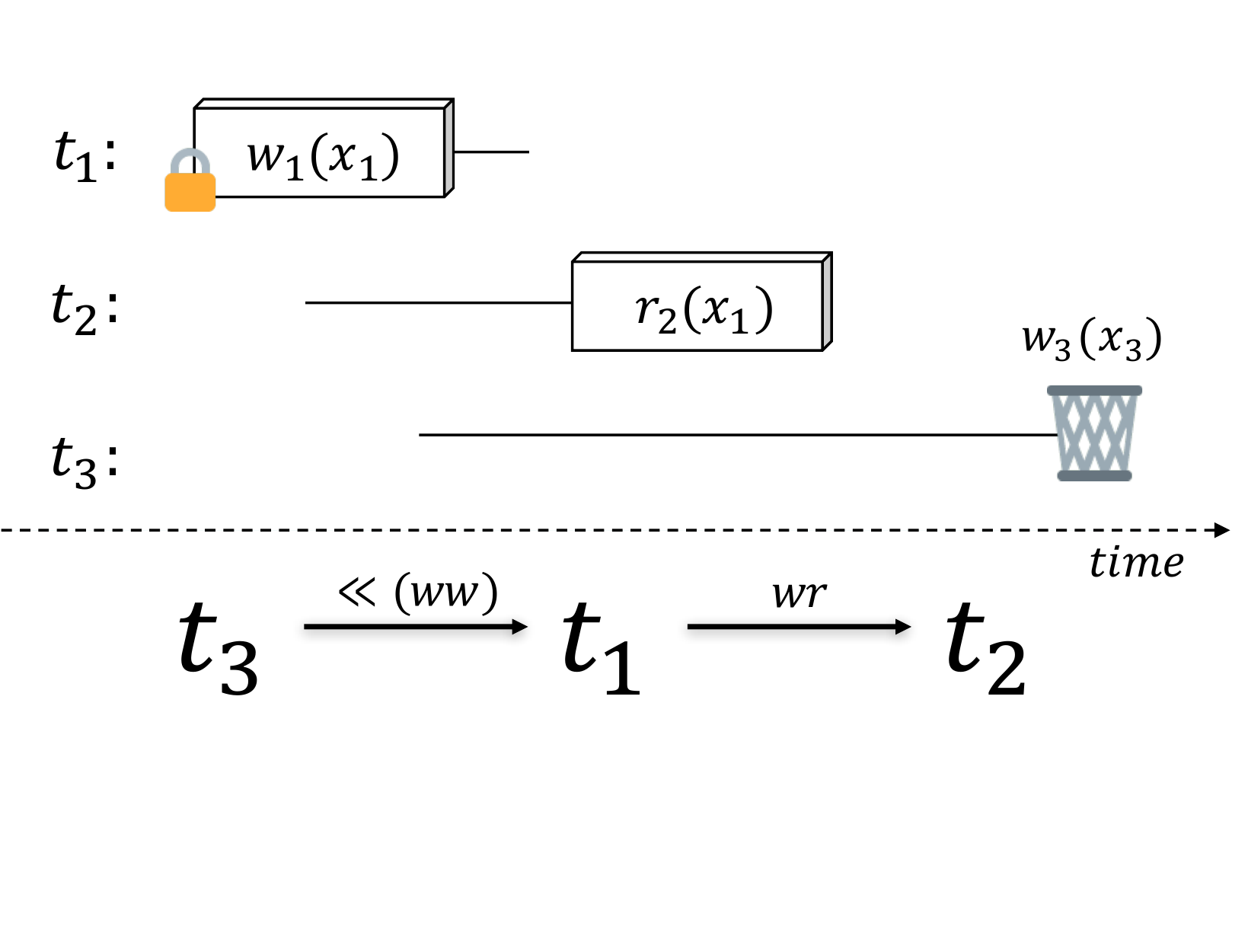}
    }
  }
  \caption{Schedules with version orders. All schedules are strict serializable. Horizontal lines of each transaction indicate the transaction's lifetime. Transactions begin at the start point of their line. (b-d) include an omittable write operation $w_1(x_1)$, $w_1(x_1)$, and $w_3(x_3)$, respectively. Operations painted as trash boxes are omitted. TWR can omit $w_1(x_1)$ in the case of (b), but reject the cases (c) and (d).}
  \label{fig:wrw_optim}
\end{figure*}

\begin{quote}
If an update to data item $x$ arrives at a replica, and the update's timestamp is larger than $x$'s timestamp at the replica, then the update is applied and $x$'s timestamp is replaced by the update's timestamp.
Otherwise, the update is discarded.
...
And eventually, each data item $x$ has the same value at all replicas, because at every replica, the update to $x$ with the largest timestamp is the last one that actually was applied
\end{quote}
where the term ``replica'' means a thread or a machine node.
Here we see that a write operation with the lower timestamp is discarded.
It is based on the fact that the other transactions will never read the discarded write operations; in T/O, a transaction reads the current versions that have the maximal timestamp for each data item.
Therefore, it is unnecessary to apply discarded write operations.
The necessity of applying write operations in TWR derives an essential observation of what is \textit{omittable}:

\begin{observation}
  \label{observation:lazy}
  Execution of a write operation $w_i(x_i)$ can be delayed until a read operation $r_j(x_i)$ arrives.
\end{observation}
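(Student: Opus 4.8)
The plan is to argue operationally about the view-equivalence guarantees of any correct (MVSR) schedule, using the MVSG machinery from Section~\ref{sec:def_of_mvsg}. The claim to establish is that deferring the physical execution of $w_i(x_i)$ until the first $r_j(x_i)$ actually arrives yields a schedule view-equivalent to the original --- and in particular, if no such $r_j(x_i)$ ever arrives, the execution can be omitted entirely. First I would fix a schedule $S$ containing $w_i(x_i)$ and let $\ll$ be the version order that the protocol under consideration generates. The key point is that $w_i(x_i)$ only affects the \emph{value} observed by other transactions through reads-from edges $t_i \WR t_j$ (i.e., operations $r_j(x_i)$); it affects the \emph{graph structure} only through the $\VORW$ and $\VOWW$ edges, which depend on the position of $x_i$ in $\ll$ and on which transactions read the versions adjacent to $x_i$, not on when $w_i(x_i)$ is physically performed.

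The main steps, in order, would be: (1)~Define the transformed schedule $S'$ obtained by moving the operation $w_i(x_i)$ rightward in $<_s$ to a position immediately before the earliest $r_j(x_i)$, leaving all other operations and all of $\ll$ unchanged (if no $r_j(x_i)$ exists, move it immediately before $c_i$, or drop it and its effect on $\ll$). (2)~Show $S'$ is legal as a schedule: since $<_s$ only orders conflicting operation pairs and we have not crossed $w_i(x_i)$ over any $r_j(x_i)$ or any other $w_k(x)$ (we stop just before the first reader), no conflict order among retained operations changes. (3)~Show $S$ and $S'$ are view-equivalent: they have the same reads-from relation (every $r_j(x_i)$ in $S$ still reads $x_i$, which is now written just before it; no read that read a version $\neq x_i$ is affected, because $w_i(x_i)$ was moved only past non-conflicting operations), hence $MVSG(S,\ll)$ and $MVSG(S',\ll)$ are identical and acyclicity is preserved. (4)~Conclude that correctness (strict serializability and recoverability) of $S$ transfers to $S'$: serializability via the identical acyclic MVSG; the wall-clock/real-time ordering for strict serializability is unaffected because delaying $w_i(x_i)$ only shrinks $t_i$'s apparent interval of effect and cannot create a new real-time inversion; recoverability is preserved because delaying a write cannot cause a commit to precede a commit it reads from.

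The main obstacle I anticipate is case~(3) combined with~(4) in the degenerate situation where \emph{no} $r_j(x_i)$ ever arrives: here one must argue that simply discarding $w_i(x_i)$ does not change the MVSG in a way that could introduce a cycle. Removing $x_i$ from $\ll$ deletes the node contributions of $x_i$'s adjacency ($\VORW$ edges out of readers of $x_i$'s predecessor toward $t_i$, and $\VOWW$ edges from $t_i$ toward writers after $x_i$), which can only \emph{remove} edges, so acyclicity is certainly preserved --- but one must check carefully that no \emph{other} transaction's correctness depended on those edges being present (it cannot, since MVSR only requires the existence of \emph{some} acyclic MVSG, and we are exhibiting one). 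A secondary subtlety is the interaction with recoverability when $t_i$ has already been promised to commit: I would note that an omitted write has no value to be dirty-read, so the recoverability obligations are vacuous for that operation. I expect the bulk of the argument to be the bookkeeping in steps~(2)--(3), which is routine once the reads-from invariance is stated precisely.
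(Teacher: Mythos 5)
First, note that the paper offers no proof of Observation~\ref{observation:lazy} at all: it is an informal, motivating remark extracted from the prose description of TWR, justified in a single sentence (``T/O's read protocol guarantees that discarded versions are never be read''), and its formal content is deliberately deferred to Definition~\ref{def:omittable} and to Theorem~\ref{theo:NWR_is_correct} with its appendix lemmas. So you are supplying a proof where the authors only assert; measured against the paper, your write-up is a formalization attempt rather than an alternative to an existing argument.

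As a formalization it is mostly sensible, but it contains one genuine gap and one bookkeeping error. The gap is in step~(4): you claim that correctness, and in particular strict serializability, transfers automatically, and in the degenerate case you simply drop $w_i(x_i)$ and its entry in $\ll$. For a \emph{fixed, completed} schedule in which no $r_j(x_i)$ occurs this is defensible (removing $x_i$ only deletes $\VORW$ and $\VOWW$ edges, so an acyclic, real-time-compatible topological order survives), but the observation is deployed in an \emph{online} setting where whether ``a read operation $r_j(x_i)$ arrives'' is decided by the protocol itself: if the protocol omits $x_i$, later readers are served an older version, and when such a reader is not concurrent with $t_i$ this is exactly the stale read that violates strict serializability. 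This is why the paper does \emph{not} conclude correctness from the observation, but instead develops ST-Rule (concurrency with $t_{pv}$) and SR-Rule in Section~\ref{sec:invisiblewriterule}; the schedules of Figure~\ref{fig:wrw_optim}(c)--(d) are precisely cases your step~(4) would wave through without the additional validation. The bookkeeping error is in step~(2): between $w_i(x_i)$ and the first $r_j(x_i)$ there may well be other writes $w_k(x_k)$ to the same data item (and reads of other versions of $x$), so your rightward move does reorder conflicting pairs in $<_s$, contrary to what you assert; this happens to be harmless because multiversion view equivalence depends only on the reads-from relation and the separately chosen version order, but you should argue via reads-from invariance directly rather than via a false claim that no conflict order changes. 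A further, smaller mismatch is that the first reader of $x_i$ may occur after $c_i$, so ``moving the operation'' can push it outside $t_i$'s lifetime; the paper avoids this by distinguishing the logical operation from its physical materialization, and your transformation should do the same.
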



Clearly, T/O with TWR utilizes this observation.
Since T/O's read protocol guarantees that discarded versions are never be read, TWR safely postpones executions of write operations indefinitely.
Throughout the paper, we define ``omitting'' as indefinitely delaying execution of an operation.
Based on this observation, we now define an omittable write operation without the concept of timestamps:

\begin{definition}[Omittable]
  \label{def:omittable}
  A write operation $w_j(x_j)$ is \textbf{omittable} if there \textbf{always} does not exist $t_i$ such that $r_i(x_j)$ is in $t_i$.
\end{definition}

Definition \ref{def:omittable} states that any protocol can generate omittable write operations since the timestamps are no longer needed.
For a version $x_j$, if there is the guarantee of non-visibility of $x_j$, we can omit $w_j(x_j)$.
In fact, this guarantee is provided by simply omitting $w_j(x_j)$ without lockings, updating timestamps, or any coordination among the other transactions.
Since omitting does not store any footprints of $x_j$, subsequent transactions cannot know $x_j$'s existence.

Next, we consider the correctness for the generated schedules.
After omitting a write operation, does a strictly serializable protocol even keep this property?
The answer is derived from version orders and concurrency of transactions.
Remind that a schedule is serializable iff there exists a version order which makes the MVSG acyclic, and it is strictly serializable iff there does not exist reordering of operations or versions among non-concurrent transactions.
Hence, if a protocol finds out a version order and validates the concurrency of transactions, the generated schedules are still strictly serializable.
We later show the formal condition and validation for a version order in Section \ref{sec:invisiblewriterule}.
Note that omittable write operations do not affect recoverability; recoverability is violated only when a committed transaction read a version written by a non-committed transaction \cite{Hadzilacos1988ASystems}.

\subsection{Discussion}

We now rethink TWR in terms of version orders and reveal that TWR is the too restrictive rule.
Figure \ref{fig:wrw_optim} depicts four schedules and version orders.
Notice that all schedules are strictly serializable since all MVSGs with illustrated version orders are acyclic, and all transactions are concurrent.
In Figure \ref{fig:wrw_optim}, (b-d) include an omitted operation, but (a) does not.
We assume subscripts represent timestamps of T/O, and thus T/O with TWR allows omitting only in the case of (b).
In (c), T/O must abort $t_1$ since the read-timestamp of $x_3$ has already updated.
In (d), T/O must execute $w_3(x_3)$ since timestamps indicate $t_1 < t_3$; thus $t_3$ must execute $x_3$ as the latest version of $x$ and $x_3$ does not have non-visible property.
It should be pointed out that (a) and (d) show the completely same schedules, and the only difference is the version orders.
In these schedules, we can in theory omit $w_3(x_3)$; however, the existing lock-based or timestamp-based protocols \cite{Bernstein1987ConcurrencySystems,Tu2013SpeedyDatabases,Lim2017Cicada:Transactions, Larson2011High-performanceDatabases} always execute $w_3(x_3)$ as the latest (visible) version of $x$ since their version orders are generated based on the locking order or the timestamp order.
The tight coupling between version order generation and protocol implementation is the current limitation for improving the performance of write operations.

\section{NWR-extended protocols}
\label{sec:design_overview}

To fully utilize the notion of omittability, we have to consider a new approach to generate version orders, not depending on any implementation of existing protocols.
In this section, we propose a novel technique to extend existing protocols by \textit{validating the additional version order $\ll^{NWR}$}.
We extend existing protocols to generate the additional version order $\ll^{NWR}$ and validate its correctness with an additional algorithm.
We call the protocols that have extended based on our technique as \textbf{``NWR-extended protocols''}.
Our approach is based on the principle that any protocol can in theory generate and validate any number of version orders.
In Section \ref{sec:control_flow}, we show the overview of NWR-extended protocols and clarify how to generate the additional version order $\ll^{NWR}$.
In Section \ref{sec:invisiblewriterule}, we define NWR, a formal condition for a version order to preserve the correctness.
In Section \ref{sec:validation_overview}, we show the validation of the $\ll^{NWR}$'s correctness, derived from NWR.

\subsection{Control Flow}
\label{sec:control_flow}

We now explain the overview of how to extend existing protocols.
Figure \ref{fig:flowchart} depicts the control flow of NWR-extended protocols.
We assume a \textit{baseline protocol}, which preserves strict serializability and recoverability, and assume a running transaction $t_j$ is requesting to commit.
Of course when we do not omit any operations in $t_j$, the baseline protocol ensures the correctness; however, NWR-extended protocols interrupt the baseline protocol at $t_j$'s commit request and try to omit all write operations in $t_j$.
We assume that the selected $t_j$ already satisfies recoverability, and a given version order $\ll$ makes MVSG acyclic in the current schedule $S$ such that $c_j$ and $a_j$ are not in $S$.

\textbf{Generate the additional version order.}
At first, NWR-extended protocols generate an additional version order $\ll^{NWR}$.
It should be done at a reasonable cost, but unfortunately, the computational cost of finding all possible version orders is NP-Complete \cite{Bernstein1983MultiversionAlgorithms,Papadimitriou1982OnVersions}.
To reduce the computational cost, we add a component of \textbf{the pivot version}, denoted as $x_{pv}$, for each data item in order to generate an additional version order $\ll^{NWR}$ and validate it.
The pivot versions are the landmark for generating $\ll^{NWR}$;
NWR-extended protocols generate $\ll^{NWR}$ such that $x_j$ is the version just before $x_{pv}$ for all $x_j$ in $ws_j$ and differences between $\ll$ and $\ll^{NWR}$ are always within data items $x_j$ in $ws_j$.
It means that the two graphs $MVSG(S, \ll)$ and $MVSG(S, \ll^{NWR})$ are isomorphic.
It is unnecessary to generate and validate the whole MVSG for $\ll^{NWR}$; only a connected subgraph, which includes $t_j$, is necessary for validations.

\textbf{Validations of the correctness.}
We next add two components for validation of the correctness.
These components validate that the pair of $S \cup \{c_j\}$ and $\ll^{NWR}$ does not violate serializability and strict serializability.
For serializability, we check the acyclicity of $MVSG(S \cup \{c_j\}, \ll^{NWR})$.
For strict serializability, we check $t_j$ and all $t_{pv}$, that wrote $x_{pv}$, are concurrent.
Note that strict serializability allows reading or writing old versions only among concurrent transactions, and subsequent non-concurrent transactions must read the latest versions for all data items.
Hence, all $x_j$ in $ws_j$ must be non-latest versions to omit them with the guarantee of non-visibility.
The version order $\ll^{NWR}$ based on the pivot versions always satisfies this condition.

\begin{figure}[t]
    \includegraphics[width=0.45\textwidth]{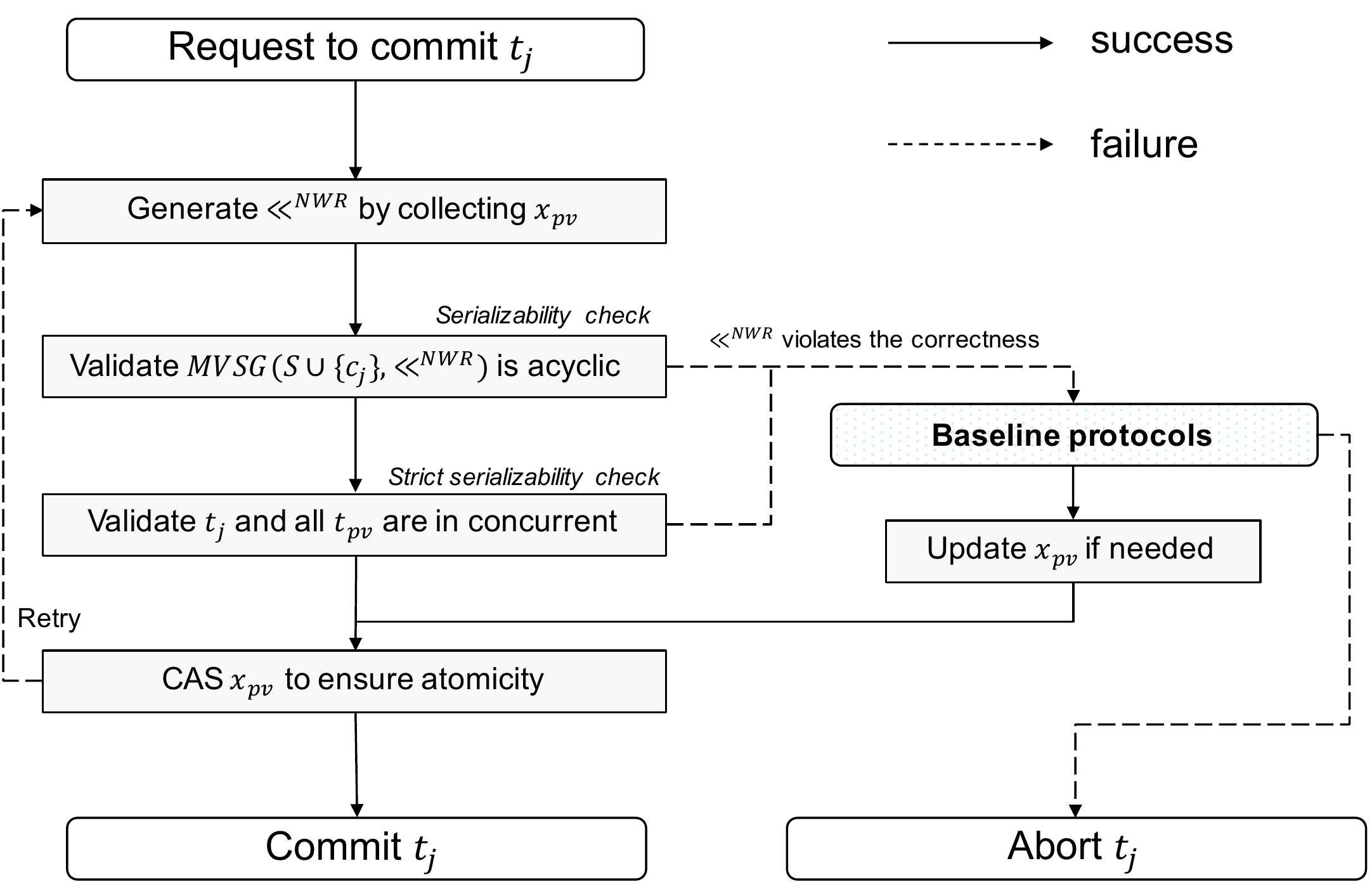}
   \caption{The control flow of NWR-extended protocols.} 
    \label{fig:flowchart}
\end{figure}

\textbf{CAS to ensure atomicity.}
After validations, NWR-extended protocols execute compare and swap (CAS) \cite{Guide2011IntelManual,Technology2012AMD64Date} operations to ensure atomicity of validations.
When all validations and CAS operations pass, then the generated schedule with $\ll^{NWR}$ preserves the correctness; and thus, NWR-extended protocols go to omit all write operations of $t_j$ and commit $t_j$ without any exclusive locking, buffer update, index update, and logging.
In Section \ref{sec:mergedsets_update} we describe the more details of lock-free updates.

\textbf{No false aborts.}
For the case of validation failures, NWR-extended protocols transfer the control to baseline protocols.
Then baseline protocols validate serializability and strict serializability again with their version order, and they determine $t_j$'s termination operation.
Notably, all added components in the control flow do not abort $t_j$ directly.
Only baseline protocols can abort $t_j$.
Our technique offers some overhead for added components but does not increase the abort ratio of transactions.

\textbf{Updating the pivot versions.}
The validations of strict serializability tell us an important observation of how to update the pivot versions.
If we choose a not suitable version as a pivot version $x_{pv}$, validations always fail.
From the definition of MVSG, if $x_{pv}$ is generated by a ``read-modify-write'' operations, the validation of serializability always fails.
More precisely for a running transaction $t_j$, if there exists read-modify-write $r_k(x_i)w_k(x_k)$ and $\ll^{NWR}$ includes $x_i <_v x_j <_v x_k$, MVSG always has the cyclic path $t_k \VORW t_j \VOWW{} t_k$.
Therefore, for serializability, the pivot versions should be updated by \textit{blind-writes} \cite{Weikum2001TransactionalRecovery}.
Furthermore, strict serializability indicates another restriction.
If $\ll^{NWR}$ includes $x_j <_v x_{pv}$ and $t_{pv}$ is not concurrent with $t_j$, strict serializability is violated and the validation fails.
Now we obtain the next proposition:
\begin{proposition}
  A write operation is omittable with the correctness only if there exists another concurrent \textit{blind-write} operation.
\end{proposition}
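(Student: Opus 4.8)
The plan is to prove the contrapositive of each half of the implication, using the structure of $MVSG$ and the constraints imposed by the pivot-version construction described in Section~\ref{sec:control_flow}. Suppose $w_j(x_j)$ is omittable with the correctness preserved, so that there exists a version order $\ll^{NWR}$ for $S \cup \{c_j\}$ such that (i) $MVSG(S \cup \{c_j\}, \ll^{NWR})$ is acyclic and (ii) $t_j$ is concurrent with every transaction $t_{pv}$ that wrote the pivot version $x_{pv}$ immediately following $x_j$ in $\ll^{NWR}$. By Definition~\ref{def:omittable}, no transaction ever reads $x_j$, so $x_j$ cannot be the latest version of $x$ — hence some $x_{pv}$ with $x_j <_v x_{pv}$ exists in $\ll^{NWR}$, and there is a corresponding write operation $w_{pv}(x_{pv})$. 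The goal is to show that $w_{pv}(x_{pv})$ must be a blind write and must be concurrent with $t_j$.

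\textbf{Step 1: $w_{pv}(x_{pv})$ is a blind write.} Suppose for contradiction that $t_{pv}$ performs a read-modify-write on $x$, i.e. $r_{pv}(x_i)\,w_{pv}(x_{pv})$ is in $t_{pv}$ for some version $x_i$. Since $\ll^{NWR}$ is derived from the pivot construction so that $x_j$ sits just before $x_{pv}$, and since $x_j$ is non-latest, the three versions satisfy $x_i <_v x_j <_v x_{pv}$ in $\ll^{NWR}$ (the reader $x_i$ of $x_{pv}$ precedes $x_{pv}$, and $x_j$ is inserted immediately before $x_{pv}$). Applying the MVSG edge rules to the operations $w_i(x_i)$, $r_{pv}(x_i)$, $w_j(x_j)$: from $x_i <_v x_j$ we get $t_{pv} \VORW{} t_j$, and from $x_j <_v x_{pv}$ applied to $w_{pv}(x_{pv})$, $r_{pv}(x_i)$ — wait, more carefully, consider $w_j(x_j)$, $r_{pv}(x_i)$, $w_{pv}(x_{pv})$ with $x_i <_v x_j$: this yields $t_{pv} \VORW{} t_j$; and consider $w_j(x_j)$, $r_{pv}(x_i)$, $w_{pv}(x_{pv})$ again noting $x_j <_v x_{pv}$ so by the $ww$ rule $t_j \VOWW{} t_{pv}$. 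These two edges form the cycle $t_{pv} \to t_j \to t_{pv}$, contradicting acyclicity of $MVSG(S \cup \{c_j\}, \ll^{NWR})$. Hence $w_{pv}(x_{pv})$ is a blind write.

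\textbf{Step 2: $w_{pv}(x_{pv})$ is concurrent with $t_j$.} This is exactly condition (ii) above, which is part of what it means for the correctness (specifically strict serializability) to be preserved: by the discussion in Section~\ref{sec:control_flow}, since $x_j <_v x_{pv}$ in $\ll^{NWR}$, strict serializability forbids $t_{pv}$ and $t_j$ from being non-concurrent — otherwise there is a reordering of versions among non-concurrent transactions. So $w_{pv}(x_{pv})$ lies in a transaction concurrent with $t_j$. Combining Steps 1 and 2 gives a concurrent blind-write operation, as claimed.

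\textbf{Main obstacle.} The delicate point is Step~1: justifying that the relevant version order among $x_i$, $x_j$, $x_{pv}$ is forced to be $x_i <_v x_j <_v x_{pv}$ rather than some other arrangement. This relies on the specific way $\ll^{NWR}$ is built around pivot versions (Section~\ref{sec:control_flow}): $x_j$ is placed immediately before $x_{pv}$, and $\ll^{NWR}$ agrees with the baseline $\ll$ elsewhere, so the reader-version $x_i$ of a read-modify-write producing $x_{pv}$ sits at or before $x_{pv}$'s predecessor slot. One must argue that inserting $x_j$ just before $x_{pv}$ cannot place $x_j$ before $x_i$ without either contradicting $x_j$ being non-latest or re-introducing the same isomorphism/cyclicity issue — essentially, any position of $x_j$ strictly between some read-point and $x_{pv}$ recreates the $t_{pv} \VORW t_j \VOWW t_{pv}$ cycle. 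Handling the edge cases where $x_i = x_0$ (initial version) or where multiple read-modify-writes target $x$ requires care but does not change the essential argument.
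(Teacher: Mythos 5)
Your proof is correct and takes essentially the same route as the paper's own derivation in Section~\ref{sec:control_flow}: a read-modify-write successor $r_{pv}(x_i)w_{pv}(x_{pv})$ with $x_i <_v x_j <_v x_{pv}$ forces the cycle $t_{pv} \VORW{} t_j \VOWW{} t_{pv}$, so the pivot must be a blind write, and the concurrency requirement follows from strict serializability applied to $x_j <_v x_{pv}$. The only difference is that you make explicit (in your ``main obstacle'' remark) the need to justify the ordering $x_i <_v x_j <_v x_{pv}$, which the paper simply assumes from the pivot construction; this does not change the argument.
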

From this we update the pivot versions in the case a running transaction $t_j$ execute $x_j$ such that $w_j(x_j)$ is blind-write.
This observation also tells us that inserting operations can update the pivot versions, but they are not omittable with the correctness.
For an inserting operation $w_j(x_j)$, the other blind-write is only $w_0(x_0)$.
Clearly $t_0$ is not concurrent with $t_j$ and thus the control flow fails to omit $w_j(x_j)$.

\subsection{Non-visible Write Rule}
\label{sec:invisiblewriterule}

We next define \textbf{Non-visible Write Rule (NWR)}, the formal condition for the additional version order, to clarify what kind of version order preserves the correctness and how to validate strict serializability in the control flow.
NWR consists of five rules: non-visible property rule (NV-Rule), preserving version order rule (PV-Rule), recoverability rule (RC-Rule), serializability rule (SR-Rule), and strict serializability rule (ST-Rule).
NV-Rule forces that all $t_j$'s write operations are non-latest versions, and they are omittable.
PV-Rule forces that the difference between $\ll$ and $\ll^{NWR}$ must be only related to data items $x_j$ in $ws_j$.
As has been noticed above, this rule efficiently reduces the computational cost.
Remaining the three rules are derived straightforwardly from the definition of serializability \cite{Bernstein1983MultiversionAlgorithms}, strict serializability \cite{Herlihy1990Linearizability:Objects}, and recoverability \cite{Hadzilacos1988ASystems}, respectively.

\begin{definition}[Non-visible Write Rule]
  \label{def:invisiblewriterule}
  Let $S, \ll, t_j$ be a schedule, a version order for $S$, and a running transaction such that $t_j \in trans(S) \land c_j, a_j \notin S$, respectively.
  Let $S$ be recoverable, $MVSG(S, \ll)$ be acyclic, and $S$ be strictly serializable.
  For a version order $\ll'$, we define NWR as follows:

  \begin{description}
    \item [ \ \ NV-Rule] \mbox{}\\
           $\forall x_j \exists x_k (x_j \in ws_j \Rightarrow  x_j <_v' x_k)$ where $<_v'$ in $\ll'$

    \item[  \ \ PV-Rule] \mbox{}\\
          $\forall x_i \forall x_k ((x_i,x_k \in \{x\} \land x_i \ne x_k \ne x_j \land x_i <_v x_k) \Rightarrow x_i <_v' x_k)$ where $<_v$ in $\ll$ and $<_v'$ in $\ll'$

    \item[  \ \ SR-Rule] \mbox{}\\
          $t_j \notin RN(t_j)$

    \item[  \ \ ST-Rule]\mbox{}\\
          $\forall t_k \exists p_j ((t_k \in RN(t_j) \land p_j \in t_j) \Rightarrow  p_j <_S c_k)$ where $<_S$ in $S$

    \item[  \ \ RC-Rule] \mbox{}\\
          $\forall x_i(x_i \in rs_j \Rightarrow c_i \in S)$

  \end{description}
  where $RN$ is defined on $MVSG(S \cup \{c_j\}, \ll')$.
\end{definition}

The following theorem states that if NWR is satisfied, a protocol can commit $t_j$ with the correctness:

\begin{theorem}[Correctness]
  \label{theo:NWR_is_correct}
  If a version order $\ll'$ satisfies NWR,
  $S \cup \{c_j\}$ is strictly serializable and recoverable.
\end{theorem}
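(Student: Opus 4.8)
The plan is to show that the three "correctness" rules (SR-Rule, ST-Rule, RC-Rule) each translate directly into one of the three structural facts that characterize strict serializability and recoverability of $S \cup \{c_j\}$ under the version order $\ll'$, while NV-Rule and PV-Rule are the bookkeeping that lets us reuse the hypotheses about $(S, \ll)$. The key preliminary observation is that by PV-Rule the only version-order changes from $\ll$ to $\ll'$ concern data items written by $t_j$, and by NV-Rule each such $x_j$ is placed below some $x_k$, i.e. it is not the $\ll'$-maximal version of its data item. Hence no read edge of the form $r_i(x_j)$ can exist in $CP(S \cup \{c_j\})$ once $w_j(x_j)$ is omitted — this is exactly Definition~\ref{def:omittable} (Omittable) — so $MVSG(S \cup \{c_j\}, \ll')$ differs from $MVSG(S, \ll)$ only by adding the node $t_j$ together with the edges incident to it; the rest of the graph is unchanged. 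I would make this isomorphism-up-to-$t_j$ precise first, since everything else rests on it.

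Next I would handle serializability. By the Bernstein–Radzilacos–Goodman theorem quoted in Section~\ref{sec:def_of_mvsg}, $S \cup \{c_j\}$ is MVSR iff $MVSG(S \cup \{c_j\}, \ll')$ is acyclic for some version order; we will use $\ll'$. Since $MVSG(S, \ll)$ is acyclic by hypothesis and the new graph only adds $t_j$ and its incident edges, any cycle in the new graph must pass through $t_j$, hence corresponds to a path from $t_j$ back to $t_j$, i.e. $t_j \in RN(t_j)$ on $MVSG(S \cup \{c_j\}, \ll')$. SR-Rule forbids exactly this, so the graph is acyclic and $S \cup \{c_j\}$ is (multiversion view) serializable. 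For recoverability, RC-Rule says every version $t_j$ read was written by an already-committed transaction; combined with the hypothesis that $S$ itself is recoverable, adding $c_j$ cannot create a committed-reads-from-uncommitted situation, so $S \cup \{c_j\}$ is recoverable. (I would cite the formal definition of recoverability deferred to Appendix~\ref{sec:proof} and check it against RC-Rule line by line.)

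The substantive step is strict serializability, and I expect ST-Rule to be where the real work lies. Strict serializability requires that the serialization order induced by the (acyclic) MVSG be consistent with the real-time order: if $t_k$ must be serialized after $t_j$ — equivalently $t_k \in RN(t_j)$, since reachability in the MVSG is the serialization order — then $t_k$ may not have finished in wall-clock time before $t_j$ started; otherwise we would be reordering non-concurrent transactions, which the discussion in Section~\ref{sec:formal_aspects_of_iw} identifies as the forbidden case (reading/writing "too old" versions). ST-Rule states that for every such $t_k$ there is an operation $p_j$ of $t_j$ with $p_j <_S c_k$, i.e. $t_j$ was still live (had not yet committed, indeed had a pending operation) when $t_k$ committed — so $t_j$ and $t_k$ overlap in real time and are concurrent. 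I would argue that this, together with the facts that (i) $S$ was already strictly serializable with $\ll$, and (ii) by PV-Rule/NV-Rule the only new real-time-ordering obligations introduced by $\ll'$ are those involving $t_j$ and the transactions $t_{pv}$ whose writes sit just above some $x_j \in ws_j$, closes the argument: every new edge out of $t_j$ goes to a concurrent transaction, and every new edge into $t_j$ either already existed in $MVSG(S,\ll)$ modulo the node $t_j$ or, again by concurrency, is between overlapping transactions. The main obstacle will be making precise exactly which real-time constraints are "new" — one must be careful that redirecting $x_j$ below $x_{pv}$ does not silently create a long-range $\VOWW$ or $\VORW$ edge between two old, non-concurrent transactions; PV-Rule is what rules this out, and I would spell out that because the relative order of all versions other than those in $ws_j$ is untouched, no edge between two nodes of $trans(CP(S))$ changes, so the only edges to audit for the real-time condition are those incident to $t_j$, which is precisely what ST-Rule audits.
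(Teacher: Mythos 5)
Your proposal is correct and follows essentially the same route as the paper's proof: the same three-way decomposition (acyclicity of $MVSG(S\cup\{c_j\},\ll')$ from PV-Rule's isomorphism plus SR-Rule; real-time consistency of the topological order from ST-Rule applied to $RN(t_j)$; recoverability from RC-Rule), assembled into the theorem exactly as in Lemmas \ref{theo:NWR_is_sr}--\ref{theo:recoverable}. The only nitpick is that the absence of committed readers of any $x_j\in ws_j$ follows from the recoverability of $S$ together with $c_j\notin S$ (as the paper notes in the proof of Theorem \ref{theo:what_are_RN}), not from NV-Rule's non-maximality, but this does not affect the structure or validity of your argument.
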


The proof is given in Appendix \ref{sec:proof}.

Hence, if a version order $\ll^{NWR}$ generated by NWR-extended protocols satisfies NWR, a running transaction $t_j$ can in theory omit all write operations with the correctness.

\subsection{Validation Overview}
\label{sec:validation_overview}
At the time a running transaction $t_j$ is selected and $\ll^{NWR}$ is generated based on the pivot versions, it is clear that the rules of NWR are satisfied except SR-Rule and ST-Rule.
We next explain the overview of validations for these two rules.
A key observation is that both rules collect and validate reachable transactions $RN(t_j)$.
We now separates directly reachable transactions into two sets:
  $$overwriters_j := \{t_k| t_j \VORW{} t_k \in MVSG(S, \ll^{NWR})\}$$
  $$successors_j := \{t_k| t_j \VOWW{} t_k \in MVSG(S, \ll^{NWR})\}$$
The following theorem is then immediate:

\begin{theorem}[Reachable transactions]
  \label{theo:what_are_RN}
  If RC-Rule is satisfied, then any path in $RN(t_j)$ is starting from $t_j \to t_k$ such that $t_k$ is in $overwriters_j$ or $successors_j$.
\end{theorem}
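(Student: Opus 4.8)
The plan is to prove the equivalent statement that in $G := MVSG(S \cup \{c_j\}, \ll^{NWR})$ every edge leaving $t_j$ is a $\VORW{}$ edge ending in $overwriters_j$ or a $\VOWW{}$ edge ending in $successors_j$. Granting this, the theorem is immediate: every path counted by $RN(t_j)$ starts at $t_j$, so its first edge leaves $t_j$ and hence lands in $overwriters_j \cup successors_j$. By the construction of the $MVSG$, each edge of $G$ has exactly one of the types $\WR{}$, $\VORW{}$, $\VOWW{}$, and the $\VORW{}$- and $\VOWW{}$-successors of $t_j$ are, by definition, the sets $overwriters_j$ and $successors_j$. So the whole argument reduces to showing that $t_j$ has no outgoing $\WR{}$ edge in $G$.

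Before that I would fix the committed projection. Because $c_j, a_j \notin S$ we have $t_j \notin trans(CP(S))$, while $trans(CP(S \cup \{c_j\})) = trans(CP(S)) \cup \{t_j\}$; hence the only edges of $G$ not already in $MVSG(S, \ll^{NWR})$ are those incident to $t_j$, and this is the sense in which the paper's $overwriters_j$ and $successors_j$ (written over $MVSG(S,\ll^{NWR})$, in which $t_j$ is not yet a node) are to be read, namely as the $\VORW{}$/$\VOWW{}$ successor sets of $t_j$ once $c_j$ is present. RC-Rule enters here: it forces every version of $rs_j$ to have been produced by a transaction already committed in $S$, so in each $\VORW{}$ clause with $t_j$ in the reader position all three participating transactions lie in $trans(CP(S \cup \{c_j\}))$, and, together with PV-Rule keeping the version order off $ws_j$ unchanged, it guarantees that the $\VORW{}$ out-edges of $t_j$ are exactly those listed in $overwriters_j$, none spurious and none missing; the $\VOWW{}$ case is symmetric.

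It remains to exclude a $\WR{}$ edge $t_j \WR{} t_k$ in $G$. Such an edge requires $t_k \in trans(CP(S \cup \{c_j\}))$ with $t_k \neq t_j$, and $r_k(x_j) \in t_k$ for some $x_j \in ws_j$; since $t_k \neq t_j$, it was already committed in $S$, i.e.\ $c_k \in S$. Thus in $S$ a committed transaction reads a version written by $t_j$, although $t_j$ is neither committed nor aborted in $S$ — a dirty read by a committed transaction, which contradicts recoverability: by RC-Rule together with the standing hypothesis that $S$ is recoverable, $S \cup \{c_j\}$ is recoverable, so the commit of the writer $c_j$ must precede $c_k$; but $c_j$ is appended at the very end of $S \cup \{c_j\}$ while $c_k$ already lies in $S$ — a contradiction. (A transaction reading its own write induces no $MVSG$ edge, so the degenerate case $t_k = t_j$ does not occur.) Hence $t_j$ has no outgoing $\WR{}$ edge, which finishes the reduction and the proof.

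The logical core — the $\WR{}$-edge exclusion — is short; I expect the real care to lie in the bookkeeping of the second step, that is, reconciling the paper's definitions of $overwriters_j$ and $successors_j$ (stated over a graph in which $t_j$ is not a node) with $G = MVSG(S \cup \{c_j\}, \ll^{NWR})$, on which $RN(t_j)$ is evaluated, and then checking via RC-Rule and PV-Rule that inserting $c_j$ creates no out-edge of $t_j$ outside $overwriters_j \cup successors_j$. Everything else is a direct unfolding of the $MVSG$ edge rules and of the definition of recoverability; notably, SR-Rule and ST-Rule are not needed for this statement.
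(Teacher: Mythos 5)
Your proof is correct and follows essentially the same route as the paper's own (sketched) proof: the entire content is that $t_j$ has no outgoing $\WR{}$ edge because a committed $t_k$ reading some $x_j \in ws_j$ would violate recoverability of $S$ (where $c_j \notin S$), so every edge leaving $t_j$ is $\VORW{}$ or $\VOWW{}$ and hence lands in $overwriters_j$ or $successors_j$. Your additional bookkeeping about the committed projection and the graph over which $overwriters_j$ and $successors_j$ are defined is sound and merely fills in details the paper leaves implicit.
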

\begin{proof} (sketch)
As we have been noted in Section \ref{sec:def_of_mvsg}, edges of MVSG are distinguished the three types: $\WR{}, \VORW{}$, and $\VOWW{}$. 
Since RC-Rule is satisfied and $S$ is recoverable, there does not exist any edge $\WR{}$ starting from $t_j$; no committed transactions read any version $x_j$ in $ws_j$.
Therefore, the type of an edge $t_j \to t_k$ for any $t_k$ is always either $\VORW{}$ or $\VOWW{}$.
\end{proof}

So we can separate the validation of NWR-extended protocols into two sub-validations related to $successors_j$ and $overwriters_j$.
We now summarize the additional validations of NWR-extended protocols as follows: for each $t_k$ in $overwriters_j$ or $successors_j$, (1) checking $t_k$ is not reachable into $t_j$ (SR-Rule), and (2) checking $t_k$ is concurrent with $t_j$ (ST-Rule).

\subsubsection{Validation of overwriters}
We next show that it is not necessary to invent validations for $overwriters_j$.
Since elements of $overwriters_j$ depends on $t_j$'s read operations, and our approach focuses only on write operations, transactions of $overwriters_j$ are never changed from baseline protocols.
Therefore, there exist reusable validation algorithms.
We next show how to validate $overwriters_j$ for each SR-Rule and ST-Rule.

\textbf{SR-Rule validation of overwriters.}
If some transaction $t_k$ in $overwriters_j$ violates SR-Rule, there exists some path $t_j \VORW{} t_k \to  ... \to t_j$.
That is, some version $x_i$ in $rs_j$ is overwritten by $t_k$.
This type of edge is well-known as the \textit{anti-dependency} \cite{2000:GIL:846219.847380, Papadimitriou1986TheControl, Hadzilacos1988ASystems} and all serializable protocols can detect or prevent the cycles starting from an anti-dependency edge by using lockings \cite{Gray1992TransactionTechniques}, timestamps \cite{Reed1978NamingSystem.}, or validation phases \cite{Kung1981OnControl}. 
Thus, we can reuse implementations of baseline protocols straightforwardly.

\textbf{ST-Rule validation of overwriters.}
If some transaction $t_k$ in $overwriters_j$ violates ST-Rule, there exists some too old version $x_i$ in $rs_j$; however,  $t_j$ cannot read such stale versions since baseline protocols ensure strict serializability.
Thus, we do not have to consider ST-Rule validation of $overwriters_j$.

\subsubsection{Validation of successors}
\label{sec:validation_overview}
We have seen that it is unnecessary to validate $overwriters_j$; however, we have to add new validations for $successors_j$.
Algorithm \ref{alg:successors_validation_pseudo} provides the pseudo-code of the validations of both SR-Rule and ST-Rule for $successors_j$.
In (A) of Algorithm \ref{alg:successors_validation_pseudo}, we collect every transaction $t_m$ such that $t_m$ is in $successors_j$ or $t_m$ is reachable from $t_k$ in $successors_j$.
In (B), we validate ST-Rule by checking the concurrency of all $t_m$ and $t_j$.
Next, (C) and (D) are validations of SR-Rule.
We focus and find the last edge $t_m \to t_j$ in cyclic paths.
Note that any transaction $t_m$ does not have a path $t_m \VOWW{} t_j$ since $S$ is recoverable; the edge $\VOWW{}$ is added into MVSG only if some committed transactions read some $x_j$ in $ws_j$ (it is impossible because of RC-Rule).
Therefore, we need only to check the two types of the last edge $t_m \WR{} t_j$ and $t_m \VORW{} t_j$; (C) checks the former and (D) checks the latter.
(C) validates that some $y_n$ in $rs_j$ is newer than $y_m$ in $ws_m$ for some $t_m$.
If it holds, there exists a cyclic path $t_j \VOWW{} ... \to t_m \WR{} t_j$.
Likewise, (D) validates that some $y_j$ in $ws_j$ is newer than $y_g$ in $rs_m$ for some $t_m$.
If it holds, there exists a cyclic path $t_j \VOWW{} ... \to t_m \VORW{} t_j$.
Consequently, when we pass validations (C) and (D), any $t_k$ in $successors_j$ is not reachable into $t_j$.

\begin{algorithm}[t]
  \begin{algorithmic}
    \Require{$t_j, \ll^{NWR}$}
    \State{$T = \{t_k, t_i | t_k \in successors_j \land t_i \in RN(t_k) \}$}
    \Comment{(A)} 
    \State 

    \ForAll{$t_m$ in $T$}

    \If{$t_m$ commits before $t_j$'s beginning}
      \Comment{(B)} 
      \State 
      \Return{ST-Rule is not satisfied}
    \EndIf
    
    \State 

    \ForAll{$y_m$ in $ws_m$}
    \ForAll{$y_n$ in $rs_j$}
    \If{$y_m <_v y_n$ or $m = n$}
    \Comment{(C) $\WR{}$ to $t_j$} 
    \State 
    \Return{MVSG is not acyclic}
    \EndIf
    \EndFor
    \EndFor

    \State 
    \ForAll{$y_g$ in $rs_m$}
    \ForAll{$y_j$ in $ws_j$}
    \If{$y_g <_v y_j$}
    \Comment{(D) $\VORW{}$ to $t_j$} 
    \State 
    \Return{MVSG is not acyclic}
    \EndIf
    \EndFor
    \EndFor
    \EndFor
    \State{} 

    \Return{MVSG is acyclic}
  \end{algorithmic}
  \caption{validations for $successors_j$ where $<_v$ in $\ll^{NWR}$}
  \label{alg:successors_validation_pseudo}
\end{algorithm}


\section{Lock-free Implementation}
\label{sec:implementation}

We can now validate SR-Rule and ST-Rule with the validation for $successors_j$.
However, the naive implementation of Algorithm \ref{alg:successors_validation_pseudo} has huge overheads since it forces all transactions to store their footprints into shared memory; there is an assumption that we have both $rs_k$ and $ws_k$ for all reachable transactions from $t_k$ in $successors_j$, at the commit request from $t_j$.
To solve this problem, we propose \textbf{the pivot version objects}, which provide the efficient lock-free implementation of NWR-extended protocols.
We embed the pivot version object for each data item $x$.
Each object stores each pivot version $x_{pv}$ and footprints of concurrent transactions that are reachable from $t_{pv}$ on MVSG.
These objects play crucial roles of the control flow of NWR-extended protocols, described in Section \ref{sec:control_flow}.
Briefly, for a transaction $t_j$, we find $x_{pv}$ for all $x_j$ in $ws_j$, generate $\ll^{NWR}$, and validate $successors_j$ by using the pivot version objects.
In Section \ref{sec:the_data_structure}, we explain the layout of the pivot version objects.
In Section \ref{sec:mergedsets_update}, we show the detailed and efficient control flow of NWR-extended protocols in lock-free fashion.

\subsection{Layout of the pivot version objects}
\label{sec:the_data_structure}

Each pivot version object includes four fields: an epoch number, the pivot version (PV), a merged read set (mRS), and a merged write set (mWS).
Figure \ref{fig:datastructure} shows the layout of the pivot version object.

\begin{itemize}

  \item
        \textbf{Epoch.}
        This field holds 32-bits an epoch number, which enables to validate ST-Rule.
        We use \textit{epoch-based group commit} \cite{Tu2013SpeedyDatabases, Chandramouli2018FASTER:Updates} for all NWR-extended protocols.
        Epoch-based group commit divides the wall-clock time into \textit{epochs}, and assigns each transaction into the corresponding epoch.
        Transactions in an epoch are grouped, and their commit operations are delayed until the epoch is \textit{stable}.
        This nature of delaying commit operations is favorable for ST-Rule;
        it enables us to check that two transactions are concurrent in a simple way.
        Since transactions in the same epoch are committed at the same time, we can say that transactions in the same epoch are concurrent.
        The epoch field stores the epoch number of the transaction $t_{pv}$, which write the pivot version of this data item.
        If the epoch number does not match the current ($t_j$'s) epoch, ST-Rule does not satisfied.
        We process the epoch number as the same with Silo \cite{Tu2013SpeedyDatabases}; after all transactions in an epoch are committed, subsequent transactions are assigned in the next epoch.
  \item
        \textbf{The pivot version (PV).}
        This field stores the \textit{version number} of pivot version $x_{pv}$, which enables to generate the additional version order $\ll^{NWR}$.
        Version numbers are decided in the version order $\ll$ given by baseline protocols and reset for each epoch.
        Generally, baseline protocols have per-data item version numbers such as write-timestamp \cite{Yu2016Tictoc:Control,Lim2017Cicada:Transactions} or transaction id \cite{Tu2013SpeedyDatabases}, thus we can reuse them.
        For a running transaction $t_j$, a version number of $x_j$ in $ws_j$ can take two possible values:
        for the additional version order $\ll^{NWR}$, the version number of $x_j$ is given by decrement of $x_{pv}$ since the version $x_j$ is just before version of $x_{pv}$ in $\ll^{NWR}$.
        If $t_j$ does not commit with $\ll^{NWR}$ but commit with a version order $\ll'$ generated by baseline protocols, the version number of $x_j$ must reflect $\ll'$.
        We update this field at the commit of transactions that perform the first \textit{blind-update} for each epoch.
  \item
        \textbf{MergedRS (mRS) and MergedWS (mWS).}
        These fields store mRS and mWS, 32-bits hash table that enable to validate SR-Rule for $successors_j$ described in Section \ref{sec:validation_overview}.
        Figure \ref{fig:merged_sets} illustrates the layouts.
        Each data item $x, y, z, ..., \in D$ is mapped into the corresponding slot by a hash function $h$.
        Each hash table consists of eight 4-bits slots, and stores an union of $rs_k$ or $ws_k$ for all transactions $\{t_k\}$ such that $t_k$ is in $RN(t_{pv})$ or $t_k = t_{pv}$.
        Each $t_k$ stores the version number for each $x_i$ in $rs_k$ and $x_k$ in $ws_k$ into the corresponding slot if the version number is less than the current value, or the current value is zero.
        Each slot holds only the lowest version number as 4-bits integer.
        When a version number is larger than the upper bound of 4-bits integers, we truncate it to $(2^4-1)$.
        These fields reset when the epoch number proceeds.
\end{itemize}

We encode each field as 32-bits, and thus every data item maintains 128-bits the pivot version object.
On the 64-bits machine with 128-bits atomic operations such as CMPXCHG16B \cite{Guide2011IntelManual,Technology2012AMD64Date}, we can handle the pivot version objects with lock-free fashion.

\begin{figure}[t]
  \includegraphics[width=0.45\textwidth]{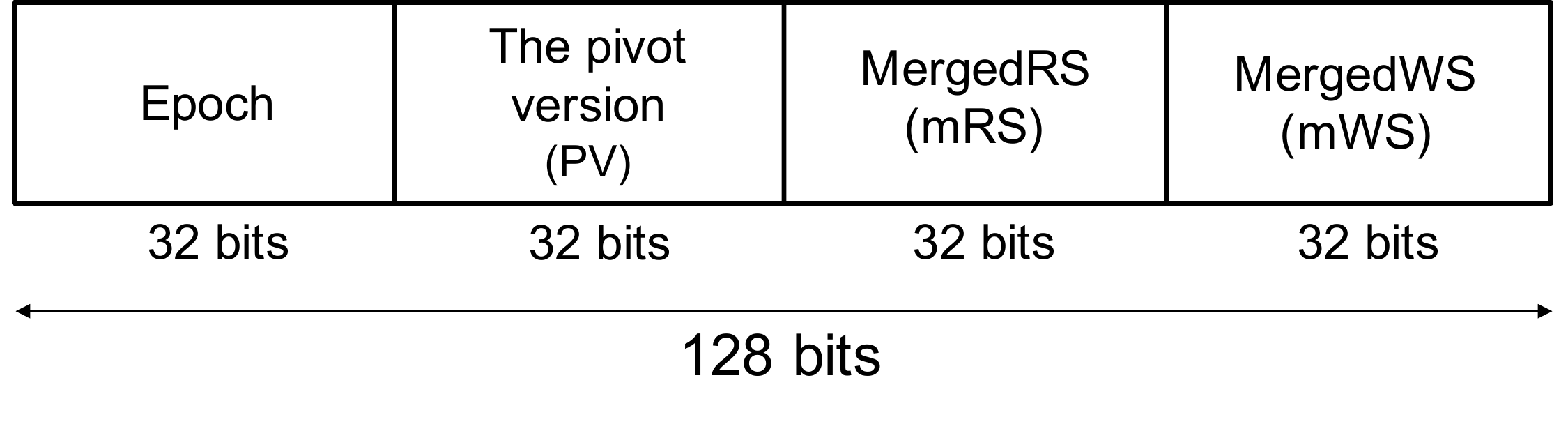}
  \caption{Layout of a pivot version object. It consists of four 32-bits fields.}
  \label{fig:datastructure}
\end{figure}

\begin{figure}[t]
  \includegraphics[width=0.45\textwidth]{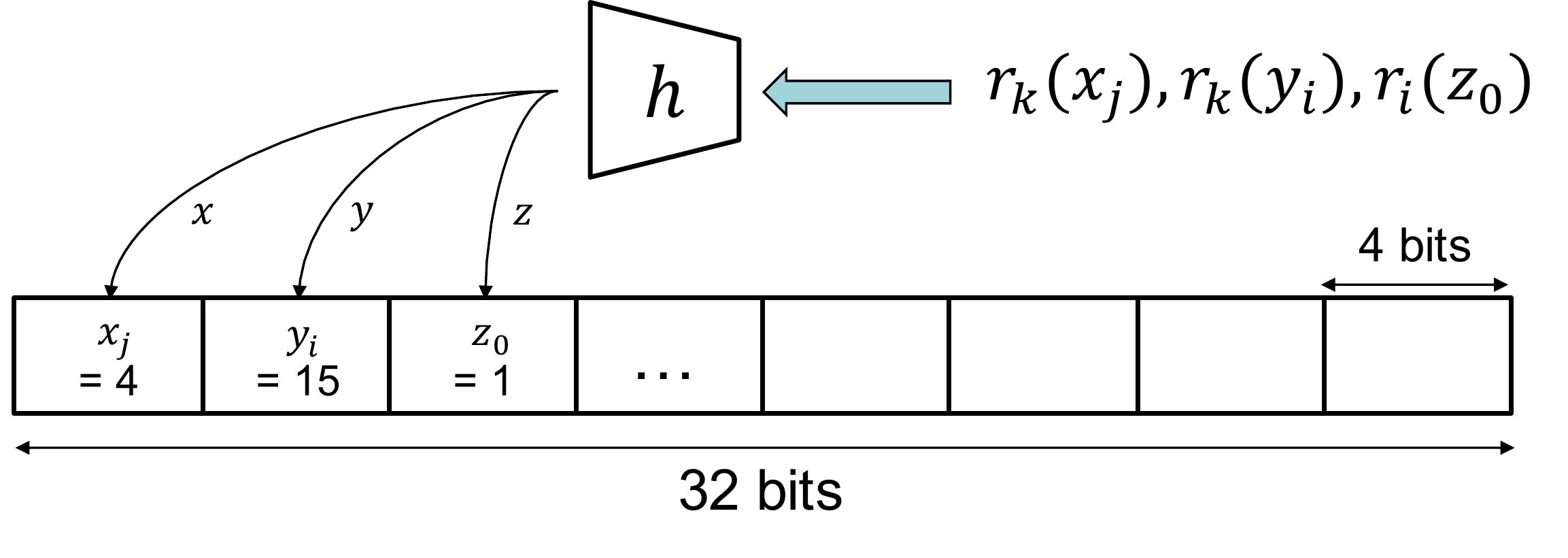}
  \caption{mRS and mWS are 32-bits hash table consisting of eight 4-bits version numbers. The version number reflects the given version order $\ll$.}
  \label{fig:merged_sets}
\end{figure}

\textbf{Validation with the pivot version objects.}
Algorithm \ref{alg:successors_validation} shows our detailed implementation of the pseudo-code shown in Algorithm \ref{alg:successors_validation_pseudo}. 
In (1), protocols fetch the pivot version objects for all $x_j$ in $ws_j$.
(2) is the validation of ST-Rule by comparing epoch numbers.
(3) and (4) are similar to (C) and (D) in Algorithm \ref{alg:successors_validation_pseudo} but present two cases of false positives.
First, hash collisions incur false positives.
When a hash function $h$ assigns data items $y$ and $z$ into the same slot, we may compare version numbers for different data items $x$ and $y$ and it incurs unnecessary aborts.
The other case of false positives is caused by 4-bits integer truncation of mRS and mWS.
When a version number in $rs_j$ or $ws_j$ is larger than the upper bound of 4-bits integers, we cannot compare the version number correctly.
Then we conclude that MVSG might not be acyclic, and thus NWR-extended protocols do not commit $t_j$ with $\ll^{NWR}$.

\begin{algorithm}[t]
  \caption{validations of $successors_j$ with false positives, where $<_v \in \ll^{NWR}$}  \label{alg:successors_validation}
  \begin{algorithmic}
    \Require{$t_j$}

    \ForAll{$x_j$ in $ws_j$}
    \State{s := get\_the\_pivot\_version\_objects($x$)}
    \Comment{(1)}

    \State 
    \If{s.epoch $\ne$ $t_j$'s epoch}
    \State 
    \Return{ST-Rule does not satisfied}
    \Comment{(2)} 
    \EndIf

    \State 
    \ForAll{$y_m$ in s.mWS}
    \If{$z_n \in rs_j$ and $h(y) = h(z)$}
    \If{$y_m \leq z_n$ or $(2^4-1) \leq z_n$}
    \Comment{(3)} 
    \State 
    \Return{MVSG may not be acyclic}
    \EndIf
    \EndIf
    \EndFor

    \State 
    \ForAll{$y_g$ in s.mRS}
    \If{$z_j$ $\in ws_j$ and $h(y) = h(z)$}
    \If{$y_g < z_j$ or $(2^4-1) \leq z_j$}
    \Comment{(4)} 
    \State 
    \Return{MVSG may not be acyclic}
    \EndIf
    \EndIf
    \EndFor
    \EndFor
    \State{} 

    \Return{MVSG is acyclic}
  \end{algorithmic}
\end{algorithm}

\subsection{Detailed control flow}
\label{sec:mergedsets_update}

Algorithm \ref{alg:NWR_updating_ds} shows the details of the control flow.
In (1), we update mRS for all $rs_j$.
In (2), we check SR-Rule and ST-Rule of both $overwriters_j$ and $successors_j$.
When some validation fails, NWR-extended protocols quit to validate $\ll^{NWR}$ and then delegate the control flow to baseline protocols to determine $t_j$'s termination operation.
(3) indicates the case when $t_j$ performs the first blind-write $w_j(x_j)$ in the current epoch.
In this case, NWR-extended protocols update all four fields of the pivot version object of $x$ to $x_{pv} = x_j$.
We reset the version number, mRS, and mWS at this time; we update $x_{pv} = 1$, the first pivot version of this epoch.
In (4), $t_j$ must update mRS and mWS since at this time $x_{pv}$ is replaced with $t_j$ and thus subsequent transactions need to validate SR-Rule with $t_j$.
It is unnecessary to update both the epoch and the pivot version fields.

Throughout the control flow, we must access the pivot version objects with the following invariants: 1) both epoch and PV field keeps the information of $t_{pv}$ and must be updated together, and 2) mRS and mWS store merged footprints for all $\{t_{pv}\} \cup RN(t_{pv})$.
When some fields hold inconsistent information, validations may be passed illegally.
The pivot version objects' 128-bits data layout enables us to preserve these invariants with lock-free fashion.
At the beginning of the commit protocol, NWR-extended protocols copy all pivot version objects to another location.
After validation and modification, we call atomic operations such as CAS or DCAS to update all fields atomically.
When CAS or DCAS has failed, we retry the commit protocol from the beginning.

\begin{algorithm}[t]
  \caption{Control flow of NWR-extended protocols}
  \label{alg:NWR_updating_ds}
  \begin{algorithmic}
    \Require{$t_j$}

    \State{\textit{\# Begin atomic section}}
    \ForAll{$x_i$ in $rs_j$}
    \State{s := get\_the\_pivot\_version\_objects(x)}
    \If{s.epoch is the current epoch}
    \ForAll{$y_k$ in mRS}
    \If{$h(x) = h(y)$ and $x_i < y_k$}
    \State{mRS[$h(x)$] := $x_i$}
    \Comment{(1)} 
    \EndIf
    \EndFor
    \EndIf
    \EndFor
    \State

    \ForAll{$x_j$ in $ws_j$}
    \State{s := get\_the\_pivot\_version\_objects(x)}
    \State 

    \If{validation of $overwriters_j$ or $successors_j$ fail}
      \If{baseline protocols determine $a_j$}
      \Comment{(2)} 
        \State{}
        \Return{abort}
      \EndIf
    \EndIf

    \If{s.epoch is the current epoch}
    \If{$w_j(x_j)$ is the first blind-write in s.epoch}
    \State{s.PV := $x_j$}
    \Comment{(3)} 
    \State{s.epoch := $t_j$.epoch}
    \State{init s.mRS with $rs_j$}
    \State{init s.mWS with $ws_j$}
    \Else
    \Comment{(4)} 
    \ForAll{$y_k$ in s.mRS}
    \If{$z_i \in rs_j$ and $h(y) = h(z)$}
    \If{$z_i < y_k$}
    \State{s.mRS[$h(y)$] := $z_i$}
    \EndIf
    \EndIf
    \EndFor

    \ForAll{$y_k$ in s.mWS}
    \If{$z_j \in ws_j$ and $h(y) = h(z)$}
    \If{$z_j < y_k$}
    \State{s.mWS[$h(y)$] := $z_j$}
    \EndIf
    \EndIf
    \EndFor
    \EndIf
    \EndIf
    \EndFor
    \State{\textit{\# End atomic section}}
    \State 
    \State{}
    \Return{commit}
    \Comment{all $w_j(x_j)$ in $t_j$ are omittable} 

  \end{algorithmic}
\end{algorithm}

\section{Evaluation}
\label{sec:evaluation}
We now show our evaluation of three NWR-extended protocols: Silo+NWR \cite{Tu2013SpeedyDatabases}, TicToc+NWR \cite{Yu2016Tictoc:Control}, and MVTO+NWR \cite{Reed1978NamingSystem.,Bernstein1982ConcurrencySystems, Lim2017Cicada:Transactions}.
Our focus of the evaluation is to clarify the followings:

\begin{itemize}
  \setlength\itemsep{0.15em}
\item{Effectiveness: omittable write operations improve the performance in write-contended workloads.}
\item{Low-overhead property: Every NWR-extended protocol keeps comparable performance to the original, even in read-intensive workloads.}
\item{Limitation: our approach is efficient only in the case when the workload contains blind-writes.}
\end{itemize}

The performance results demonstrate that, if workloads contain blind-writes, our approach is useful for both read-intensive and write-intensive workloads.

\begin{figure}[t]
  \includegraphics[width=0.45\textwidth]{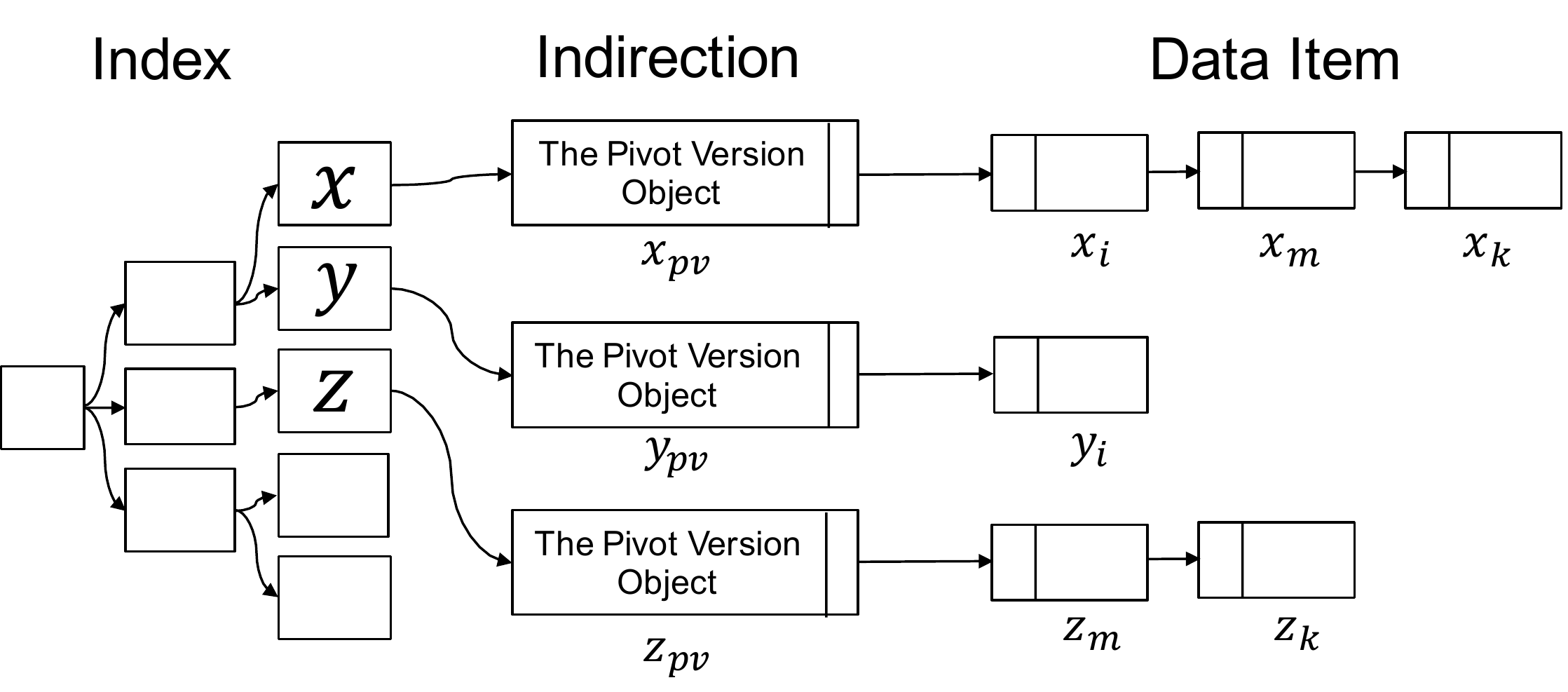}
  \caption{Overall structure of our prototype}
  \label{fig:whatisfv}
\end{figure}

\subsection{Experimental Setup}

For experiments, we implement our prototype: a lightweight, non-distributed, and embedded transactional key-value storage written in C++.
Our prototype consists of in-memory storage, various CC algorithms, a tree-based index forked by Masstree \cite{Mao2012CacheStorage}, and a parallel-logging manager \cite{Zheng2014FastParallelism, Johnson2010Aether:Logging}.
Figure \ref{fig:whatisfv} shows the overall structure of our prototype.
Every leaf-node of the index has the pointer to an indirection.
Each indirection includes a) 64-bits pointer to the data item and b) the pivot version object described in Section \ref{sec:implementation}.
Each data item is represented as a linked-list ordered by versions as newest-to-oldest.
For all single-version protocols, every linked-list always consists of a single node.

We deploy all experiments on a 144-core machine with four Intel Xeon E7-8870 CPUs and 1TB of DRAM.
Each CPU socket has 18 physical cores and 36 logical cores by hyperthreading.
The results of experiments in over 72 cores may show sub-linear scaling due to contention within physical cores.
Each socket has 45MB L3 shared cache.
Each worker thread allocates the memory from separated space specified with the Linux numactl.
We compile all queries at build time, and thus our experiments do not use both networked clients and SQL interpreters.
Each worker thread has the thread-local workload generator to input transactions by itself.

\subsubsection{Extension details}
In our experiments, we extend the following three state-of-the-art protocols:

\textbf{Silo.}
This optimistic protocol adopts the backward validation \cite{Kung1981OnControl}, and the epoch-based group commit.
Silo generates a version order by exclusive lockings.
When a transaction $t_j$ requests to commit, Silo acquires exclusive lockings for all $x_j$ in $ws_j$.
Locking ensures that all $x_j$ are the latest versions, and thus $successors_j$ is always empty set.
The validation of $overwriters_j$ in Silo checks $overwriters_j = \phi$.
If a data item $x_i$ in $rs_j$ is overwritten and there exists a newer version, Silo aborts $t_j$.
To extend Silo to Silo+NWR, we expand the 64-bits per data item transaction-id objects into the 128-bits pivot version objects.
In Silo, each transaction id object stores 1) 32-bits epoch, 2) 31-bits per-epoch version number, and 3) 1-bit exclusive lock flag.
We expand these objects into 128-bits and add mRS and mWS into the rest 64-bits.

\textbf{TicToc.}
This optimistic protocol adopts a more sophisticated validation algorithm for $overwriters_j$ than Silo.
TicToc adds read-timestamps into Silo's transaction id objects.
Even if a version $x_i$ in $rs_j$ is overwritten (it means $overwriters_j \ne \phi$), TicToc validates read-timestamps to commit $t_j$.
TicToc generates the same version order as Silo; TicToc acquires exclusive lockings for all $x_j$ in $ws_j$, and thus each version $x_j$ is the latest version of $x$.
To extend TicToc to TicToc+NWR, we compress the original encoding of the 64-bits transaction-id objects into 32-bits and store the epoch number into the rest 32-bits.

\textbf{MVTO.}
We implemented MVTO based on Cicada \cite{Lim2017Cicada:Transactions};
we apply the per-thread distributed timestamp generation (multi-clock), read sets for optimistic multi-versioning, and rapid garbage collection.
Unfortunately, Cicada's multi-clock does not ensure strict serializability.
It ensures only causal consistency \cite{Lim2017Cicada:Transactions}.
To ensure strict serializability, we apply the epoch-based group commit to both MVTO and MVTO+NWR, by expanding the 64-bits timestamp objects into the 128-bits pivot version objects with epoch numbers.
Note that MVTO aborts a transaction if its write set is invalid, in contrast with the fact that Silo and TicToc validate their read set.
Since the protocol of MVTO ensures that all transactions in $overwriters_j$ cannot be reachable into the running transaction $t_j$, MVTO executes only the validation for $successors_j$.
Hence, we do not have the reusable implementation of validation for $overwriters_j$ within MVTO.
To validate $overwriters_j$, we use the Silo's tiny implementation.
When a version $x_i$ in $rs_j$ is not the latest version of the data item, then MVTO+NWR determines that MVSG might be cyclic and cannot commit $t_j$ with $\ll^{NWR}$.

\subsubsection{Benchmarks}
We select two benchmarks for performance evaluation: YCSB \cite{Cooper2010BenchmarkingYCSB} and TPC-C \cite{10.5555/1946050.1946051}.

\textbf{YCSB.}
This workload generator is representative of current large-scale on-line benchmarks.
YCSB provides various workloads, and we choose YCSB-A (write-intensive) and YCSB-B (read-mostly) for benchmarking.
Since the original YCSB does not support a transaction with multiple operations,
we implemented a YCSB-like workload generator within our prototype.
In our implementation, each transaction accesses four data items chosen randomly based on Zipfian distribution with a parameter $\theta$.
Each data item has a single primary key and an 8-bytes additional column.
We populate our prototype as a single table with 100K data items.
That is, our prototype is always fitted in memory for all single-version protocols.
In this workload, we show how efficient NWR is for contended workloads.

\textbf{TPC-C.}
This workload is the classical industry-standard benchmark for evaluating the performance of transaction processing.
It consists of six tables and five transactions that simulate an information system of a wholesale store.
TPC-C is the write-intensive and low contention workload.
A key point of the TPC-C benchmark is that there exist only blind-inserts.
All other write operations in TPC-C are read-modify-write; they must read the latest versions to write the next versions and always fail to validate SR-Rule.
Therefore, we cannot omit any write operations in TPC-C.
In this workload, we focus on how much performance drops by our extension.

\subsection{YCSB Benchmark Results}
\label{sec:ycsb}

\subsubsection{YCSB-A: read-write mixed workload}
YCSB-A defines the proportion of operations as 50\% read and 50\% blind-write.
A key observation of this workload is that a substantial proportion of write operations are omittable.
In YCSB, all data items must be initialized before experiments \cite{Cooper2010BenchmarkingYCSB}; thus, all write operations are blind-writes.
That is, the validation failure of $\ll^{NWR}$ is rarely happen in NWR-extended protocols.
Once the latest versions for each 40ms epoch are updated, other write operations do not violate ST-Rule.
Besides, for SR-Rule, it is rare to fail the validation of $successors_j$.
This is because blind-writes reduce the number of edges in MVSG with $\ll^{NWR}$; since a transaction $t_j$ trying to omit $w_j(x_j)$ does not have read any $x_k$ in $\{x\}$, there is no cyclic path $t_j \VOWW{} t_{pv} \WR{} t_j$ for data item $x$.  
Thus, NWR-extended protocols are expected to generate a tremendous number of omittable write operations and improve performance.

\textbf{Efficiency of NWR-extended protocols.}
Figure \ref{fig:ycsb-a-high} shows the results on the YCSB-A workload with high contention rate ($\theta = 0.9$).
Prior work shows that recent protocols suffer from this write contended workloads \cite{Wu2017AnControl, Yu2014StaringCores, Kim2016Ermia:Workloads, Fan:2019:OVG:3342263.3360357}.
We later show the performance results in the same workload with variable contention rates.
With up to 21 threads, the throughput of all protocols increases as the number of threads increases.
Beyond 41 threads, coordination such as exclusive locking hurts the performances of baseline protocols.
This is because our environment has only 36 cores per single CPU socket.
Cache-coherence traffic occurred by exclusive locking degrades their throughput significantly.
However, the performances of NWR-extended protocols in YCSB-A are not limited.
When the number of thread increased, NWR-extended protocols outperform originals.
In the best case at 144 threads, the throughput of NWR-extended algorithms is more than 11x better than originals.

\begin{figure}[t]
  \centerline{
    \subfloat{
      \includegraphics[width=0.5\textwidth]{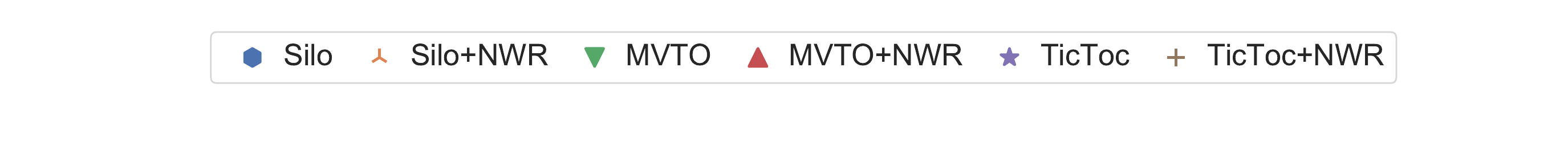}
    }
  }
  \vspace{-10pt}
  \addtocounter{subfigure}{-1}
  \centerline{
    \subfloat{
      \includegraphics[width=0.40\textwidth]{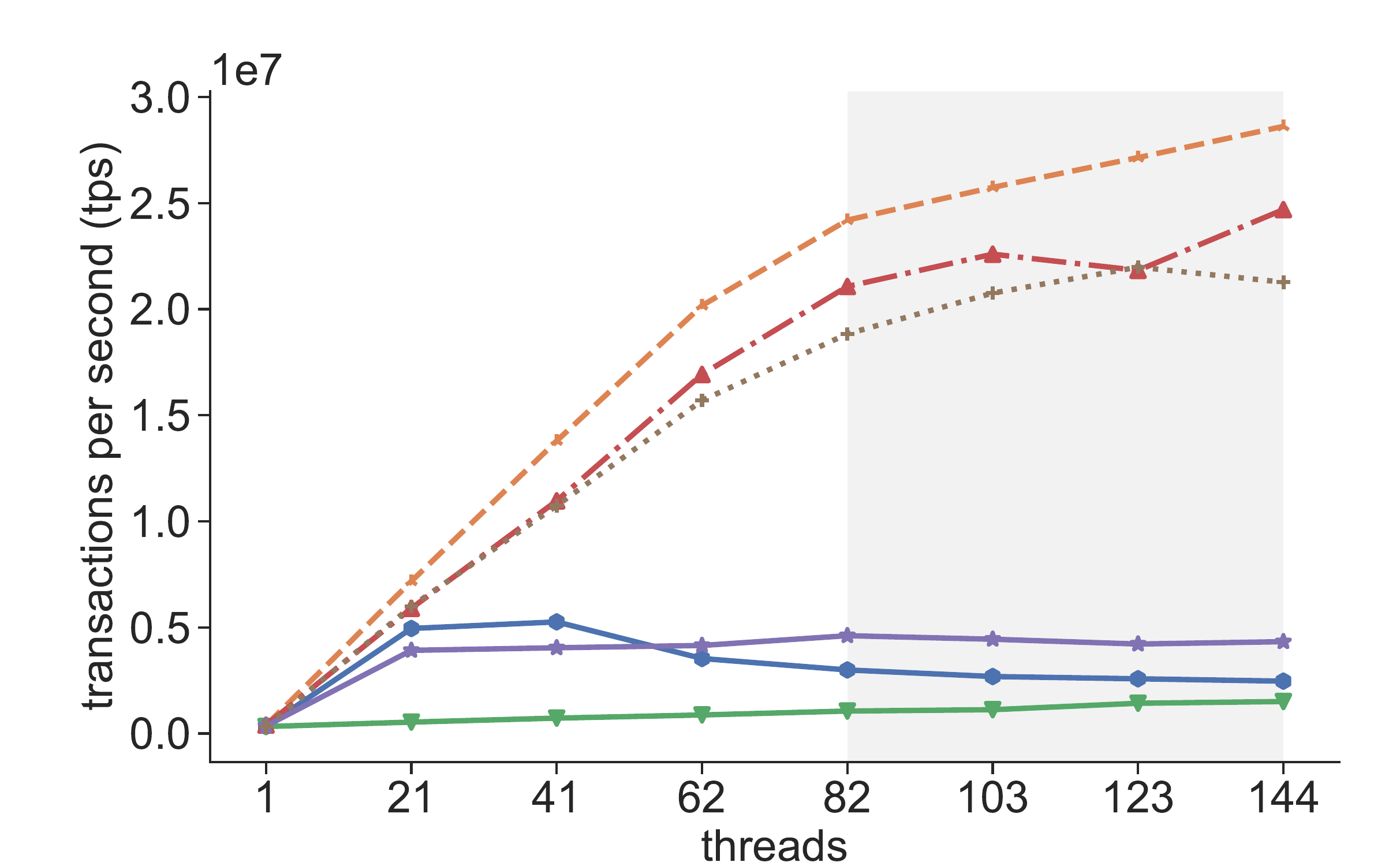}
    }
  }
  \vspace{-6pt}
  \caption{YCSB-A Benchmark results with $\theta=0.9$}
  \label{fig:ycsb-a-high}
\end{figure}

\begin{table}[t]
  \begin{tabular}{|l|c|c|}
    \hline
    Protocols & Commit ratio & Commit ratio with \tiny{$\ll^{NWR}$} \\
    \hline \hline
    Silo+NWR & 94 & 85 \\
    TicToc+NWR & 98 & 65 \\
    MVTO+NWR & 98 & 72  \\
    \hline
  \end{tabular}
  \caption{Commit ratio (\%) of the additional version order $\ll^{NWR}$, generated by NWR-extended protocols, on YCSB-A at 144 threads.}
  \label{tab:commit_ratio}
\end{table}

\begin{figure*}[t]
  \vspace{-10pt}
    \subfloat[
    At 1 thread
    ]{
      \includegraphics[width=0.42\textwidth]{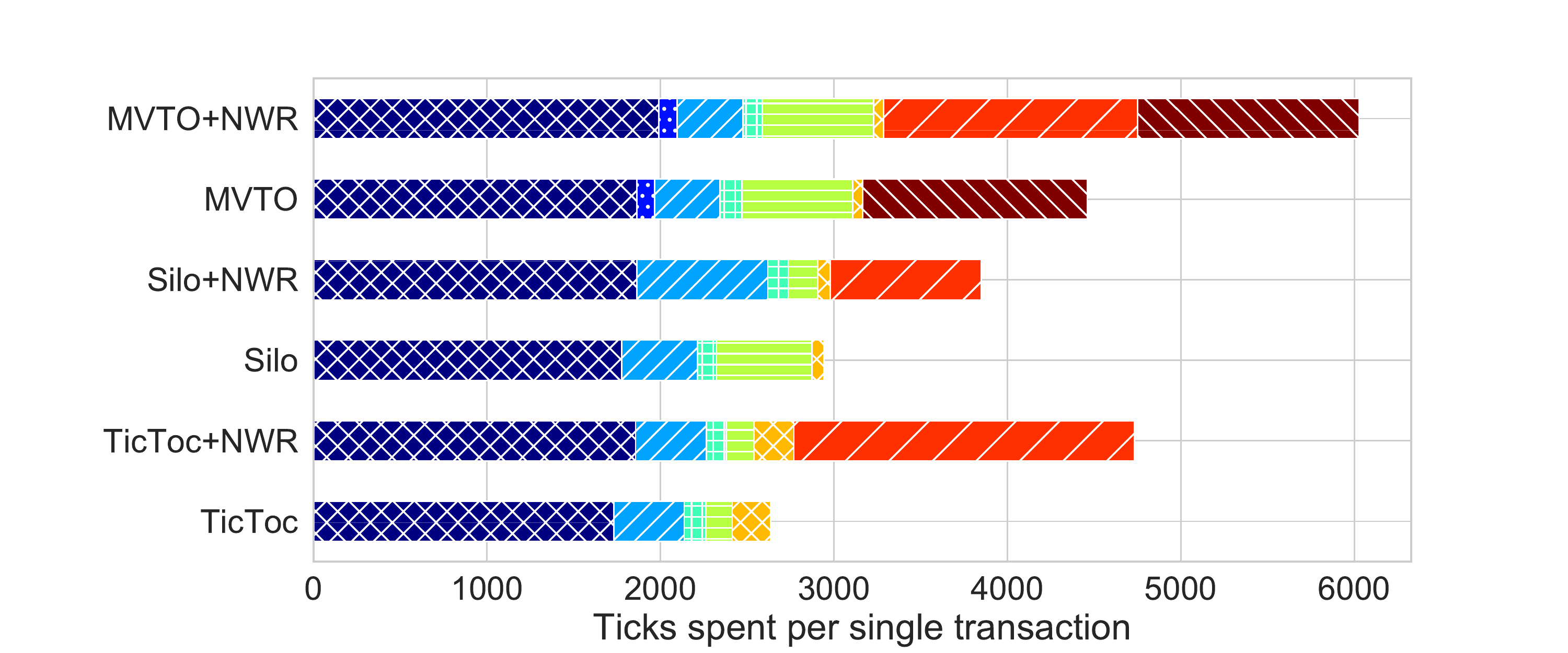}
      \label{fig:ycsb-breakdown-a-1}
    }
    \hspace{-10pt}
    \subfloat[
     At 144 threads
     ]{
      \includegraphics[width=0.42\textwidth]{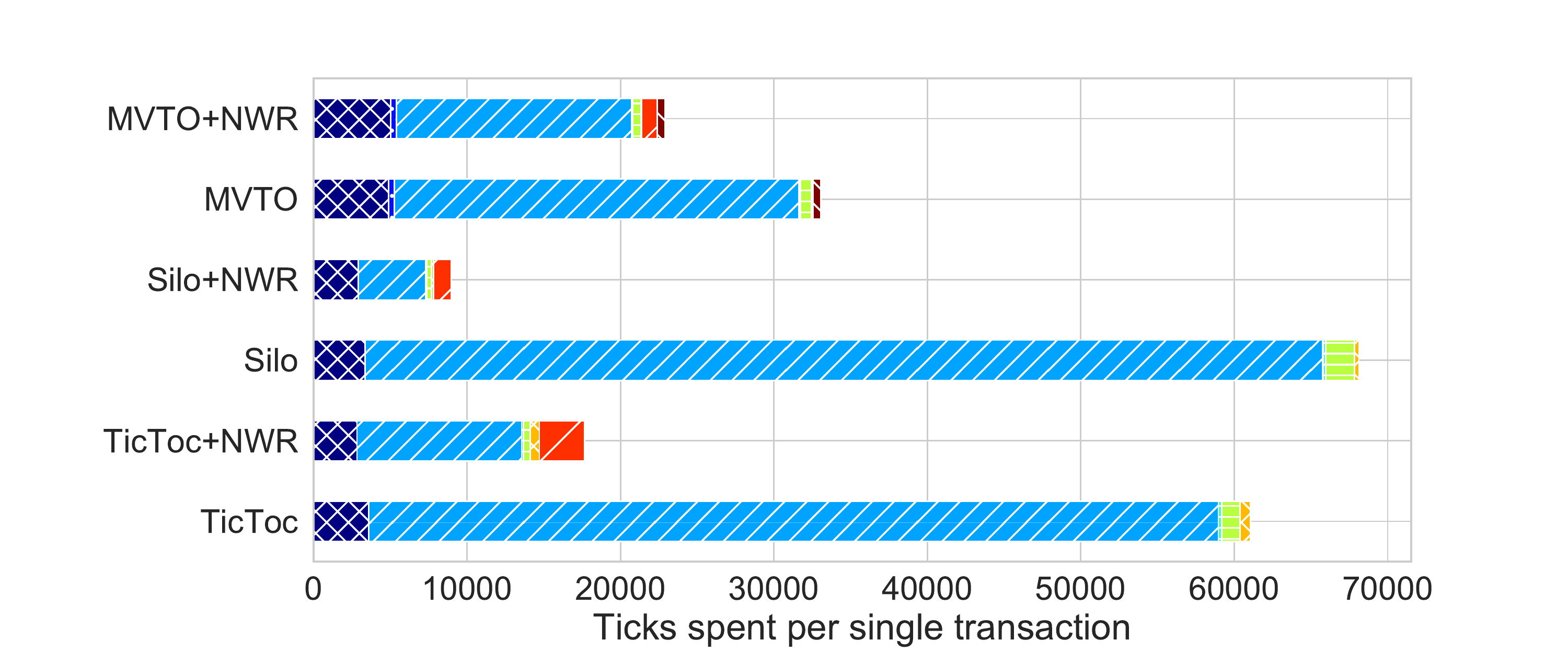}
      \label{fig:ycsb-breakdown-a-144}
    }
    \hspace{-24pt}
    \subfloat{
      \includegraphics[width=0.17\textwidth]{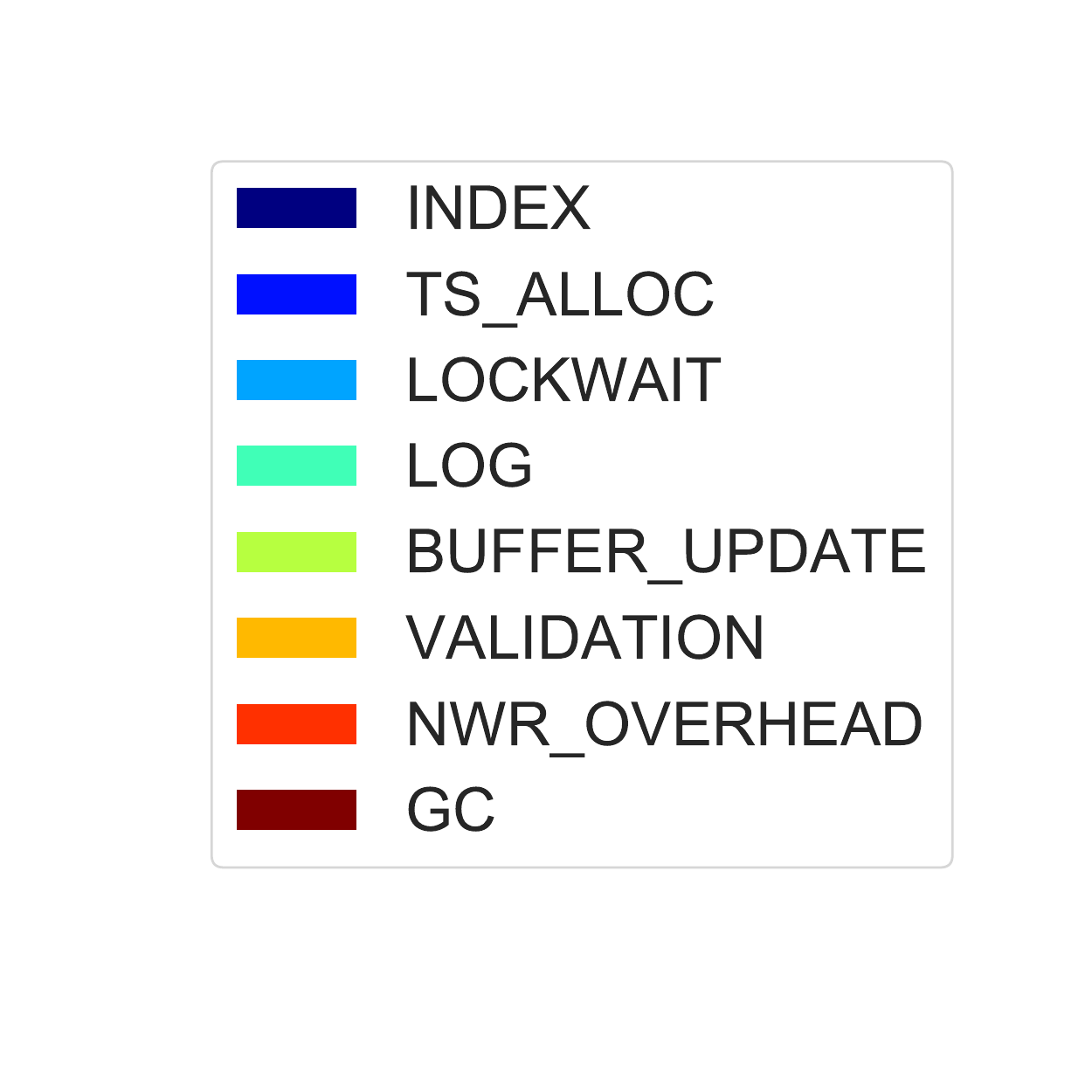}
      \label{fig:ycsb-breakdown-a-144}
    }
  
  \caption{Runtime breakdowns of Figure \ref{fig:ycsb-a-high}}
  \label{fig:ycsb-breakdown}
  \end{figure*}

\begin{figure}[t]
  \centerline{
    \subfloat{
      \includegraphics[width=0.5\textwidth]{benchmark_figures/ICDE/build/legend.pdf}
    }
  }
  \vspace{-10pt}
  \addtocounter{subfigure}{-1}
  \centerline{
    \subfloat{
      \includegraphics[width=0.40\textwidth]{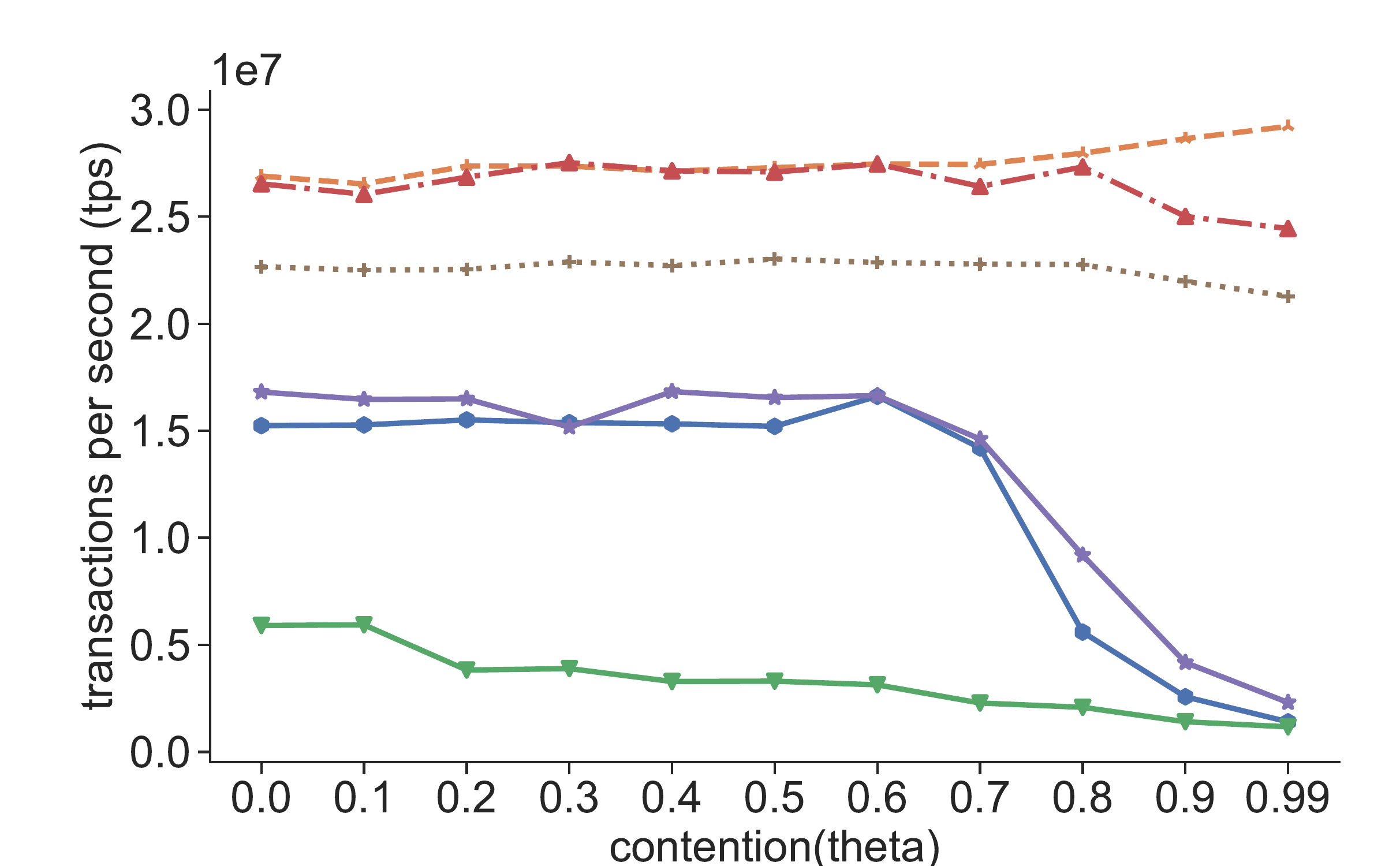}
    }
  }
  \vspace{-6pt}
  \caption{YCSB-A Benchmark results with variable $\theta$}
  \label{fig:ycsb-variable-contention}
\end{figure}

\textbf{Differences between baseline protocols.}
The results for YCSB-A also shows that TicToc+NWR and MVTO+NWR do not scale, unlike Silo+NWR.
The main reason for the performance gap is that TicToc and MVTO allow \textit{reading old versions}.
When a transaction reads a non-latest version that has already overwritten, Silo must abort the transaction, but TicToc and MVTO do not.
They may commit the transaction if their validation pass, and thus mRS in the pivot version storage may be updated with lower version numbers at any time.
Recall that the validation (4) in Algorithm \ref{alg:successors_validation} fails when a version number $y_g$ in mRS is lower.
Therefore, reading old versions decreases the commit ratio of NWR-extended protocols with $\ll^{NWR}$.
Table \ref{tab:commit_ratio} lists the commit ratio of the additional version order $\ll^{NWR}$ for each NWR-extended protocols.
Although the commit ratio of TicToc+NWR and MVTO+NWR are better than Silo+NWR, they sometimes give up to omit write operations in the validation of $\ll^{NWR}$.
In contrast, Silo+NWR efficiently commits transactions with $\ll^{NWR}$ and omits write operations.
It is known that the validation algorithm of Silo has higher false abort rates \cite{Yu2016Tictoc:Control, Yuan2016BCC:Databases, Ding2018ImprovingOptimistic}; however, it is favorable for our approach with $\ll^{NWR}$.

\textbf{Runtime breakdown.}
Figure \ref{fig:ycsb-breakdown-a-1} and \ref{fig:ycsb-breakdown-a-144} show the runtime breakdown of committed transactions at a single and 144 threads on the YCSB-A with $\theta = 0.9$, respectively.
``NWR\_OVERHEAD'' indicates the additional components of our approach described in Section \ref{sec:control_flow}.
At a single thread, the top consumer of CPU ticks is INDEX for all protocols, but the overhead of NWR is not negligible for all NWR-extended protocols.
Compared to Silo+NWR, TicToc+NWR and MVTO+NWR spend more CPU ticks for the additional components since they adopt the different approaches for validation of $overwriters_j$.
Recall that TicToc and MVTO allow reading old versions, but Silo does not; thus, they spend more ticks on updating mRS for each pivot version.
At 144 threads, the primary consumers of CPU-ticks shift to LOCKWAIT, and the overhead of NWR becomes negligibly small.
NWR-extended protocols dramatically reduce the CPU ticks wasting for locking by generating omittable write operations.
Note that MVTO, which shows the lowest performance in Figure \ref{fig:ycsb-a-high}, spent fewer CPU ticks per single transaction than Silo and TicToc.
This is because our runtime breakdown measures only the CPU ticks on concurrency control protocols for committed transactions.
Other overheads such as malloc for creating new versions or lock-wait for aborted transactions are not shown in this figure, and they are current bottlenecks.
We remain cache-efficient optimization for MVTO and MVTO+NWR as future work.

\begin{figure}[t]
  \centerline{
    \subfloat{
      \includegraphics[width=0.5\textwidth]{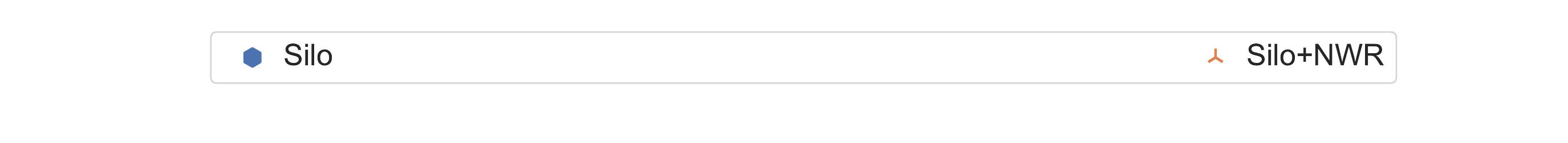}
    }
  }
  \vspace{-10pt}
  \addtocounter{subfigure}{-1}
  \centerline{
    \subfloat{
      \includegraphics[width=0.40\textwidth]{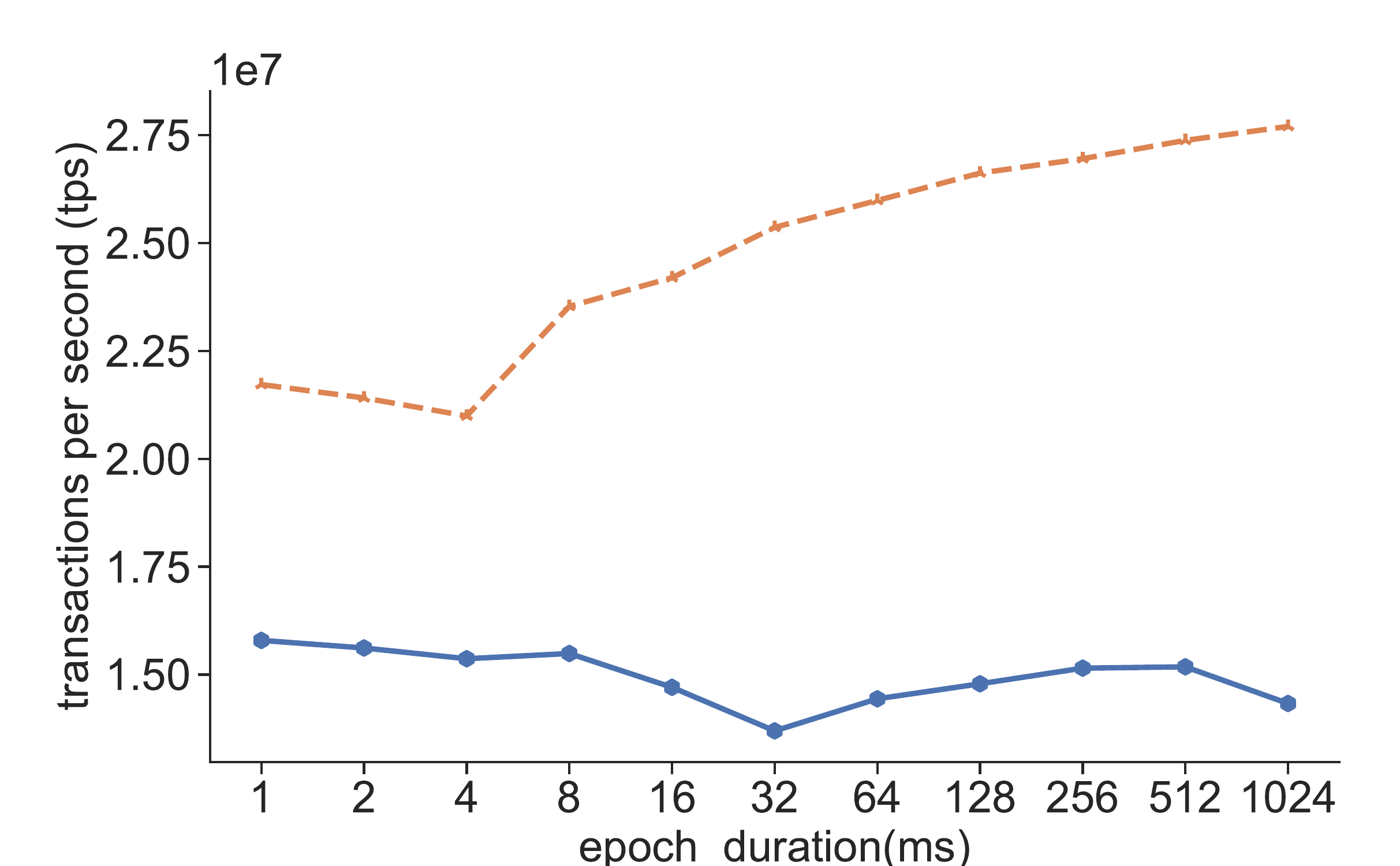}
    }
  }
  \vspace{-6pt}
  \caption{YCSB-A Benchmark results with variable epoch durations}
  \label{fig:ycsb-variable-epoch}
\end{figure}

\subsubsection{YCSB-A with variable contention rates.}
Next, we experiment YCSB-A with variable contention rates.
Figure \ref{fig:ycsb-variable-contention} shows the results at 144 threads.
Once the contention rate exceeds 0.7, the throughputs of Silo+NWR and TicToc+NWR drops due to exclusive locking for write operations.
MVTO suffers from cache-coherence traffic caused by multiversion storage.
In MVTO, it is unnecessary to acquire exclusive locking for write operations, but instead, memory allocation for creating new versions is needed, and it limits the throughput.
At the contention rate of 0.99, every baseline protocol presents almost the same throughput.
Previous studies show the same trend with this result \cite{Fan:2019:OVG:3342263.3360357, Yu2014StaringCores, Wu2017AnControl}.
However, NWR-extended protocols outperform the originals and do not degrade the throughput when the contention rate increased.
Furthermore, Silo+NWR improves performance at the contention rate of 0.99.
In this setting, transactions request to read or write to the same data items, and Silo+NWR successes to omit almost all write operations.
Therefore, almost all read operations receive the same versions, which are already stored in the CPU caches.
Compared to Silo+NWR, MVTO+NWR and TicToc+NWR do not degrade but cannot improve performance in this setting, since they allow \textit{reading old versions} as has been noted above.

\subsubsection{YCSB-A with variable epoch durations.}
As has been shown in Section \ref{sec:implementation}, NWR-extended protocols ensure ST-Rule by using epoch-based group commits.
Transactions in the same epoch are concurrent and satisfy ST-Rule for any version order.
Hence, the longer epoch duration will make more transactions concurrent and decrease the validation failure of ST-Rule.
To investigate the effect of epoch duration, we experiment YCSB-A with variable epoch duration.
Figure \ref{fig:ycsb-variable-epoch} depicts the throughputs of Silo and Silo+NWR.
The original Silo shows unstable performance as the epoch duration increases.
This is because the longer epoch duration causes larger commit buffers for each epoch.
In contrast, the throughput of Silo+NWR grows as the epoch duration increases since the ST-Rule validation failures decrease.
Thus, we can improve performance by increasing the epoch duration as long as applications allowed.

\subsubsection{YCSB-B: read-mostly workload}
Figure \ref{fig:ycsb-b-high} shows the performance results on the read-mostly workload.
YCSB-B is expected as not suitable for our approach since it defines the proportion of read operation as 95\%; that is, concurrent write operations into the same data item are rarely observed.
However, NWR-extended protocols show comparable performances with the originals, and surprisingly, Silo+NWR outperforms Silo.
This gain comes from the observation that omitting write operations have a good effect on the other transactions.
In this case, Silo+NWR replaces overwriting operations to omittable write operations, and it prevents the validation failure of other transactions.
Figure \ref{fig:ycsb-b-high-abort} shows the number of aborts in the same experiment.
Here we see Silo+NWR reduces the abort rate of the original Silo, and it provides the performance improvement in this workload.


\begin{figure}[t]
  \centerline{
    \subfloat{
      \includegraphics[width=0.5\textwidth]{benchmark_figures/ICDE/build/legend.pdf}
    }
  }
  \vspace{-10pt}
  \addtocounter{subfigure}{-1}
  \centerline{
    \subfloat{
      \includegraphics[width=0.40\textwidth]{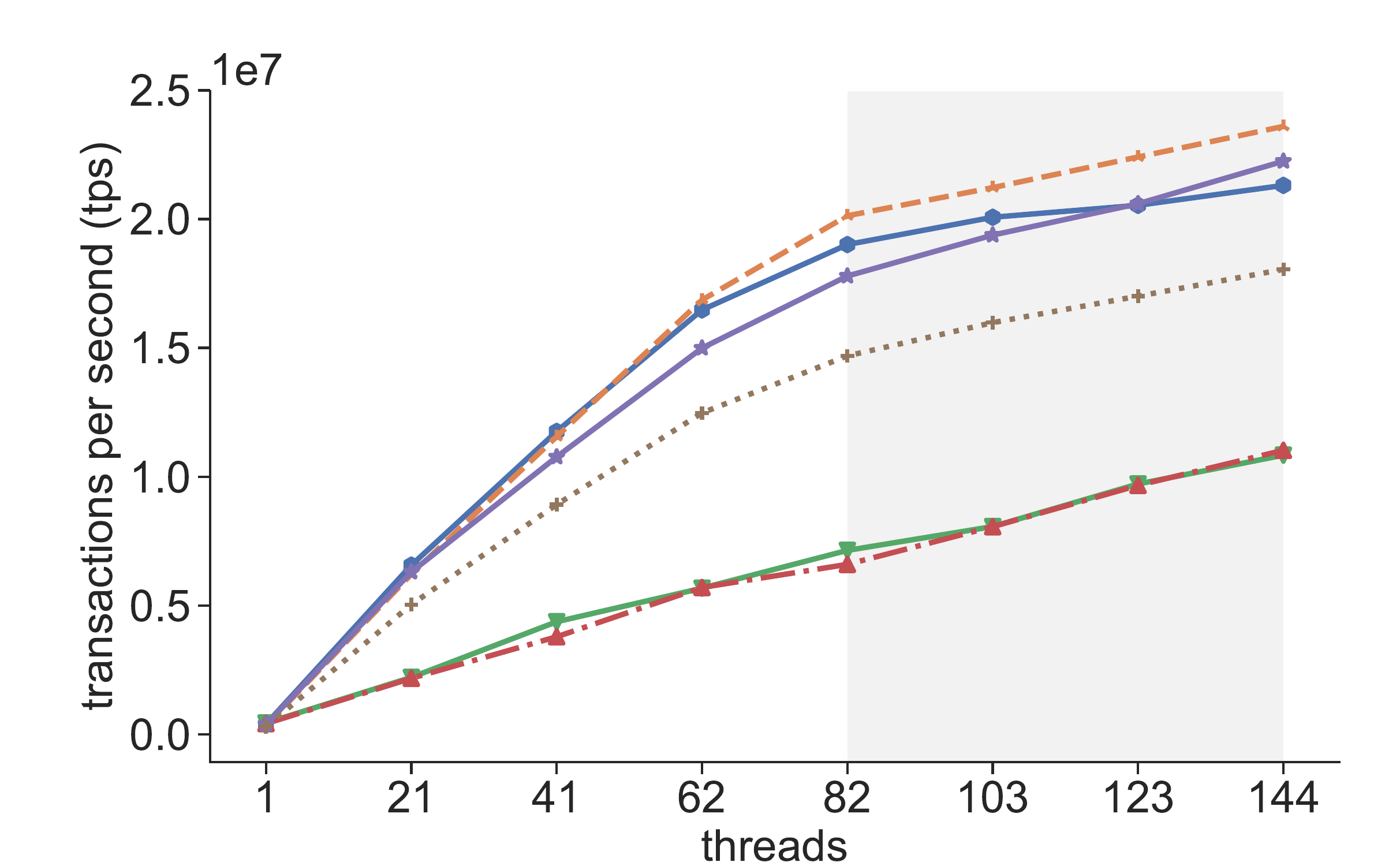}
    }
  }
  \vspace{-6pt}
  \caption{Throughput in YCSB-B with $\theta=0.9$}
  \label{fig:ycsb-b-high}
\end{figure}

\begin{figure}[t]
  \centerline{
    \subfloat{
      \includegraphics[width=0.5\textwidth]{benchmark_figures/ICDE/build/legend_only_silo.pdf}
    }
  }
  \vspace{-10pt}
  \addtocounter{subfigure}{-1}
  \centerline{
    \subfloat{
      \includegraphics[width=0.40\textwidth]{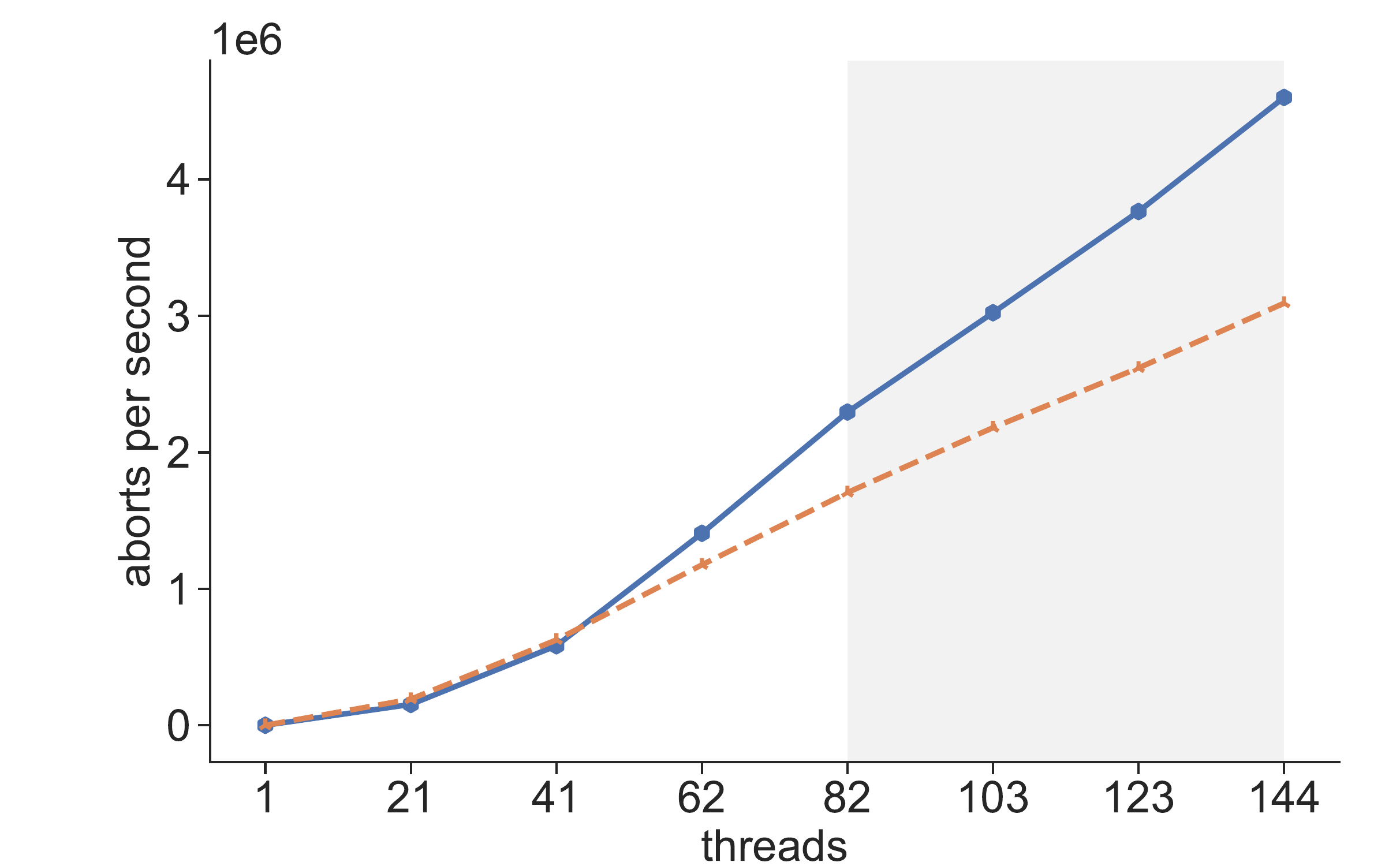}
      }
  }
  \vspace{-6pt}
  \caption{Aborts of Silo and Silo+NWR in YCSB-B}
  \label{fig:ycsb-b-high-abort}
\end{figure}


\subsection{TPC-C Benchmark Results}
\label{sec:tpc-c}

We implement TPC-C in our prototype, the same as the original specifications.
Recall that all five queries in TPC-C do not contain blind-writes without inserting queries.
It means that there are no omittable write operations in this workload.
We consider emerging workloads contains tremendous blind-writes; however, we experiment on this benchmark to illustrate the low-overhead property of our NWR-extended protocols.

\textbf{Overheads of the pivot version storage.}
To better understand the low-overhead property of NWR-extended protocols, we run the TPC-C benchmark with a single warehouse.
This high-contended scenario presents the worst case for NWR since the pivot version storage for each data item will be updated frequently but never be utilized.
Figure \ref{fig:tpcc-tps} shows the achieved throughputs as the number of threads increased.
It can be seen that Silo+NWR and TicToc+NWR's overheads are negligibly small.
The performance degradations in these two NWR-extended protocols are less than 10\%, and they achieved comparable performances with originals.
However, MVTO+NWR presents about 0.84x lower throughput than the original MVTO.
This is because the additional components in MVTO+NWR require traversing version lists twice, and it incurs the performance penalty from cache-coherence traffic. 
As we have been noted in Section \ref{sec:implementation}, MVTO+NWR uses Silo's anti-dependency validation; MVTO+NWR validates that there exists some newer version for each $t_k$ in $rs_j$.
This validation requires to traverse the version list again for each $t_k$.
Hence, MVTO+NWR needs to traverse a version list twice per single read operation, once for a read operation, and once for validating $overwriters_j$.
This additional traversing is harmful to the CPU cache, as the number of elements in a version list increases.
In TPC-C, the queries new-order and payment frequently perform read-modify-write into the warehouse and district tables, and the longer version-lists degrades the performance of MVTO+NWR.


\begin{figure}[t]
  \centerline{
    \subfloat{
      \includegraphics[width=0.5\textwidth]{benchmark_figures/ICDE/build/legend.pdf}
    }
  }
  \vspace{-10pt}
  \addtocounter{subfigure}{-1}
  \centerline{
    \subfloat{
      \includegraphics[width=0.40\textwidth]{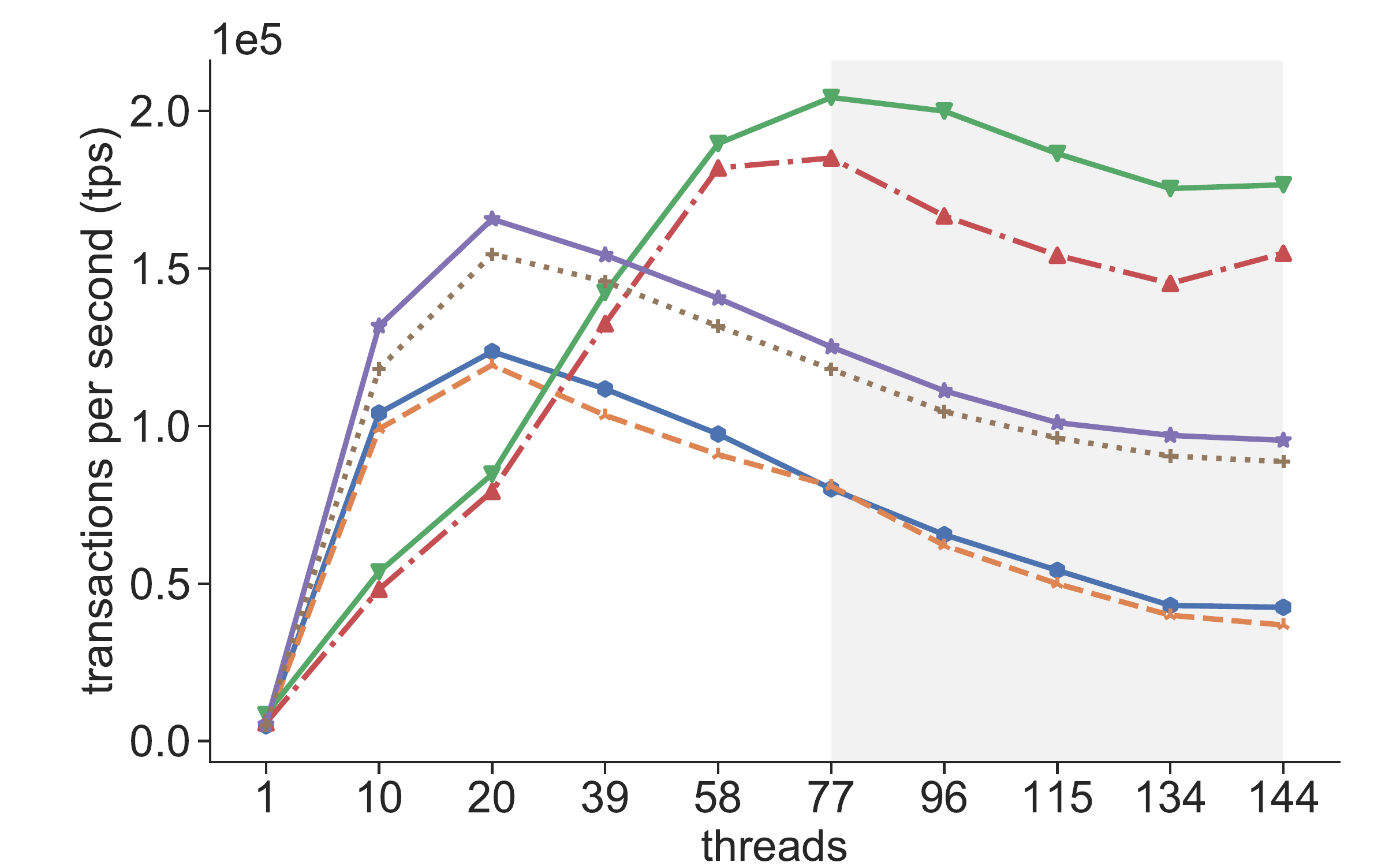}
    }
  }
  \vspace{-6pt}
  \caption{TPC-C benchmark results with a single warehouse}
  \label{fig:tpcc-tps}
\end{figure}



\section{Related Work}

\textbf{Thomas Write Rule.}
As we have been discussed in Section \ref{sec:formal_aspects_of_iw}, TWR allows T/O to omit some write operations.
Compared to NWR, the rule of TWR is more restrictive since TWR requires T/O's timestamps.
TWR validates only a single version order generated by the timestamps and allows omitting only when the timestamp order is satisfied with the rule of TWR.
NWR is the more elastic rule.
It allows any protocol to validate any number of version orders, and it does not depend on any baseline implementation.

\textbf{Lazy transaction execution.} 
Another example of the protocol which allows generating omittable write operations is in deterministic databases \cite{Ren2014AnSystems,Thomson2012Calvin:Systems}.
Deterministic databases adopt the centralized approach of concurrency control; centralized transaction managers collect transactions and separate them into some batches.
Faleiro et al. \cite{Faleiro2014LazySystems} proposed the approach of \textit{lazy transaction execution}.
It enables deterministic databases to omit some write operations.
By delaying the execution of write operations, deterministic databases execute only blind updates, which writes the latest version in each batch.
Lazy transaction execution makes other write operations omittable.
This approach is similar to our approach of NWR-extended protocols, in terms of grouping transactions; both approaches separate transactions and execute only the latest versions. 
Compared to lazy evaluation, the rule of NWR and the approach of validating the additional version order is more applicable for any protocol, since the rule of NWR does not require the centralized transaction manager.
NWR is useful for both deterministic and non-deterministic databases.

\textbf{Multiversion concurrency control protocols.}
Multiversion concurrency control (MVCC) protocols \cite{Lim2017Cicada:Transactions, Fekete2005MakingSerializable, Larson2011High-performanceDatabases} can hold multiple versions for each data item.
Multiversion read avoids the highly abort rates of single-version protocols, especially in workloads that include long-transactions \cite{Yu2014StaringCores,Wu2017AnControl,Kim2016Ermia:Workloads}.
MVSR is widely used to guarantee the correctness of MVCC, and thus all MVCC protocols can in theory generate multiple version orders.
However, existing multiversion protocols \cite{Larson2011High-performanceDatabases,Lim2017Cicada:Transactions,Kim2016Ermia:Workloads,Ports2012SerializablePostgreSQL,Kemper2011HyPer:Snapshots} generate only a single version order.
This is because the computational cost of the decision problem to find a suitable version order from all possible orders is NP-Complete \cite{Bernstein1983MultiversionAlgorithms,Papadimitriou1982OnVersions}.
Furthermore, the approach of validating multiple version orders requests protocols to verify multiple MVSGes, and it causes enormous overheads. 
NWR and NWR-extended protocols efficiently reduce the computational cost of validating multiple version orders.
NWR-extended protocols add only a single version order, which is ideal for omitting; thus, they validate only two MVSGes.
Besides, PV-Rule of NWR defines that the two MVSGes are isomorphic if a running transaction $t_j$ aborts.
Therefore, additional validation is required only for a connected subgraph which includes $t_j$.
NWR-extended protocols achieve to validate multiple version orders efficiently; they maintain only a single version order and validate two almost the same MVSGes. 

\section{Conclusion}
In this paper we have formalized the notion of omitting allowed by T/O with TWR, and we have presented the Non-visible write rule (NWR) for any concurrency control protocols.
With NWR, any protocol can in theory be extensible to NWR-extended protocols, which validates the additional version order to omit write operations with preserving both strict serializability and recoverability.
Especially in the workloads that contain blind-updates, our approach of validating the additional version order improves the performance efficiently.
We implemented three NWR-extended protocols, Silo+NWR, TicToc+NWR, and MVTO+NWR, to evaluate the performance gain of our approach.
Experimental results demonstrate that Silo+NWR outperforms the original Silo by more than 11x in the highly contended YCSB-A; meanwhile, every NWR-extended protocol has comparable performance with the original, even in the worst-case on TPC-C.

\bibliographystyle{plain}

\begin{thebibliography}{10}

\bibitem{10.5555/1946050.1946051}
Transaction processing performance council (tpc): Standard specification. tpc
  benchmark c version 5.11.0 (2010).

\bibitem{2000:GIL:846219.847380}
Generalized isolation level definitions.
\newblock In {\em Proceedings of the 16th International Conference on Data
  Engineering}, ICDE '00, pages 67--, Washington, DC, USA, 2000. IEEE Computer
  Society.

\bibitem{10.1145/223784.223785}
Hal Berenson, Phil Bernstein, Jim Gray, Jim Melton, Elizabeth O’Neil, and
  Patrick O’Neil.
\newblock A critique of ansi sql isolation levels.
\newblock In {\em Proceedings of the 1995 ACM SIGMOD International Conference
  on Management of Data}, SIGMOD ’95, page 1–10, New York, NY, USA, 1995.
  Association for Computing Machinery.

\bibitem{Bernstein1982ConcurrencySystems}
Philip~A Bernstein and Nathan Goodman.
\newblock {Concurrency control algorithms for multiversion database systems}.
\newblock In {\em Proceedings of the first ACM SIGACT-SIGOPS symposium on
  Principles of distributed computing}, pages 209--215, 1982.

\bibitem{Bernstein1983MultiversionAlgorithms}
Philip~A Bernstein and Nathan Goodman.
\newblock {Multiversion concurrency control—theory and algorithms}.
\newblock {\em ACM Transactions on Database Systems (TODS)}, 8(4):465--483,
  1983.

\bibitem{Bernstein1987ConcurrencySystems}
Philip~A Bernstein, Vassos Hadzilacos, and Nathan Goodman.
\newblock {\em {Concurrency control and recovery in database systems}}.
\newblock Addison-Wesley Pub. Co. Inc., Reading, MA, 1987.

\bibitem{Bernstein:2009:PTP:1208930}
Philip~A. Bernstein and Eric Newcomer.
\newblock {\em Principles of Transaction Processing}.
\newblock Morgan Kaufmann Publishers Inc., San Francisco, CA, USA, 2nd edition,
  2009.

\bibitem{Chandramouli2018FASTER:Updates}
Badrish Chandramouli, Guna Prasaad, Donald Kossmann, Justin Levandoski, James
  Hunter, and Mike Barnett.
\newblock {FASTER: A concurrent key-value store with in-place updates}.
\newblock In {\em Proceedings of the 2018 International Conference on
  Management of Data}, pages 275--290, 2018.

\bibitem{Cooper2010BenchmarkingYCSB}
Brian~F Cooper, Adam Silberstein, Erwin Tam, Raghu Ramakrishnan, and Russell
  Sears.
\newblock {Benchmarking Cloud Serving Systems with YCSB}.
\newblock In {\em Proceedings of the 1st ACM Symposium on Cloud Computing},
  SoCC '10, pages 143--154, New York, NY, USA, 2010. ACM.

\bibitem{Ding2018ImprovingOptimistic}
Bailu Ding, Lucja Kot, and Johannes Gehrke.
\newblock Improving optimistic concurrency control through transaction batching
  and operation reordering.
\newblock {\em Proc. VLDB Endow.}, 12(2):169–182, October 2018.

\bibitem{Durner2019NoFalseNegatives}
Dominik Durner and Thomas Neumann.
\newblock No false negatives: Accepting all useful schedules in a fast
  serializable many-core system.
\newblock In {\em 35th {IEEE} International Conference on Data Engineering,
  {ICDE} 2019, Macao, China, April 8-11, 2019}, pages 734--745, 2019.

\bibitem{Faleiro2014LazySystems}
Jose~M Faleiro, Alexander Thomson, and Daniel~J Abadi.
\newblock {Lazy evaluation of transactions in database systems}.
\newblock In {\em Proceedings of the 2014 ACM SIGMOD international conference
  on Management of data}, pages 15--26, 2014.

\bibitem{Fan:2019:OVG:3342263.3360357}
Hua Fan and Wojciech Golab.
\newblock Ocean vista: Gossip-based visibility control for speedy
  geo-distributed transactions.
\newblock {\em Proc. VLDB Endow.}, 12(11):1471--1484, July 2019.

\bibitem{Fekete2005MakingSerializable}
Alan Fekete, Dimitrios Liarokapis, Elizabeth O'Neil, Patrick O'Neil, and Dennis
  Shasha.
\newblock {Making snapshot isolation serializable}.
\newblock {\em ACM Transactions on Database Systems (TODS)}, 30(2):492--528,
  2005.

\bibitem{Gray1992TransactionTechniques}
Jim Gray and Andreas Reuter.
\newblock {\em {Transaction processing: concepts and techniques}}.
\newblock Elsevier, 1992.

\bibitem{Guide2011IntelManual}
Part Guide.
\newblock {Intel{\textregistered} 64 and ia-32 architectures software
  developer’s manual}.
\newblock {\em Volume 3B: System programming Guide, Part}, 2, 2011.

\bibitem{Hadzilacos1988ASystems}
Vassos Hadzilacos.
\newblock {A theory of reliability in database systems}.
\newblock {\em Journal of the ACM (JACM)}, 35(1):121--145, 1988.

\bibitem{Harizopoulos2008OLTPThere}
Stavros Harizopoulos, Daniel~J. Abadi, Samuel Madden, and Michael Stonebraker.
\newblock {OLTP through the looking glass, and what we found there}.
\newblock In {\em Proceedings of the 2008 ACM SIGMOD international conference
  on Management of data - SIGMOD '08}, page 981, New York, New York, USA, 2008.
  ACM Press.

\bibitem{Herlihy1990Linearizability:Objects}
Maurice~P Herlihy and Jeannette~M Wing.
\newblock {Linearizability: A correctness condition for concurrent objects}.
\newblock {\em ACM Transactions on Programming Languages and Systems (TOPLAS)},
  12(3):463--492, 1990.

\bibitem{Johnson2010Aether:Logging}
Ryan Johnson, Ippokratis Pandis, Radu Stoica, Manos Athanassoulis, and
  Anastasia Ailamaki.
\newblock {Aether: a scalable approach to logging}.
\newblock {\em Proceedings of the VLDB Endowment}, 3(1-2):681--692, 2010.

\bibitem{Kemper2011HyPer:Snapshots}
Alfons Kemper and Thomas Neumann.
\newblock {HyPer: A hybrid OLTP{\&}OLAP main memory database system based on
  virtual memory snapshots}.
\newblock In {\em 2011 IEEE 27th International Conference on Data Engineering},
  pages 195--206, 2011.

\bibitem{Kim2016Ermia:Workloads}
Kangnyeon Kim, Tianzheng Wang, Ryan Johnson, and Ippokratis Pandis.
\newblock {Ermia: Fast memory-optimized database system for heterogeneous
  workloads}.
\newblock In {\em Proceedings of the 2016 International Conference on
  Management of Data}, pages 1675--1687, 2016.

\bibitem{Kung1981OnControl}
Hsiang-Tsung Kung and John~T Robinson.
\newblock {On optimistic methods for concurrency control}.
\newblock {\em ACM Transactions on Database Systems (TODS)}, 6(2):213--226,
  1981.

\bibitem{Larson2011High-performanceDatabases}
{Larson, Per-Åke and Blanas, Spyros and Diaconu, Cristian and Freedman, Craig
  and Patel, Jignesh M and Zwilling, Mike}.
\newblock {High-performance concurrency control mechanisms for main-memory
  databases}.
\newblock {\em Proceedings of the VLDB Endowment}, 5(4):298--309, 2011.

\bibitem{Lim2017Cicada:Transactions}
Hyeontaek Lim, Michael Kaminsky, and David~G Andersen.
\newblock {Cicada: Dependably fast multi-core in-memory transactions}.
\newblock In {\em Proceedings of the 2017 ACM International Conference on
  Management of Data}, pages 21--35, 2017.

\bibitem{Mao2012CacheStorage}
Yandong Mao, Eddie Kohler, and Robert~Tappan Morris.
\newblock {Cache craftiness for fast multicore key-value storage}.
\newblock In {\em Proceedings of the 7th ACM european conference on Computer
  Systems}, pages 183--196, 2012.

\bibitem{Papadimitriou1986TheControl}
Christos Papadimitriou.
\newblock {\em {The theory of database concurrency control}}.
\newblock Computer Science Press Inc., Rockville, MD, 1986.

\bibitem{Papadimitriou1982OnVersions}
Christos~H Papadimitriou and Paris~C Kanellakis.
\newblock {On Concurrency Control by Multiple Versions}.
\newblock In {\em Proceedings of the 1st ACM SIGACT-SIGMOD Symposium on
  Principles of Database Systems}, PODS '82, pages 76--82, New York, NY, USA,
  1982. ACM.

\bibitem{Ports2012SerializablePostgreSQL}
Dan R~K Ports and Kevin Grittner.
\newblock {Serializable snapshot isolation in PostgreSQL}.
\newblock {\em Proceedings of the VLDB Endowment}, 5(12):1850--1861, 2012.

\bibitem{Reed1978NamingSystem.}
David~Patrick Reed.
\newblock {\em {Naming and synchronization in a decentralized computer
  system.}}
\newblock PhD thesis, Massachusetts Institute of Technology, 1978.

\bibitem{10.14778/3342263.3342647}
Kun Ren, Dennis Li, and Daniel~J. Abadi.
\newblock Slog: Serializable, low-latency, geo-replicated transactions.
\newblock {\em Proc. VLDB Endow.}, 12(11):1747–1761, July 2019.

\bibitem{Ren2014AnSystems}
Kun Ren, Alexander Thomson, and Daniel~J Abadi.
\newblock {An evaluation of the advantages and disadvantages of deterministic
  database systems}.
\newblock {\em Proceedings of the VLDB Endowment}, 7(10):821--832, 2014.

\bibitem{Technology2012AMD64Date}
Amd Technology.
\newblock {AMD64 Technology AMD64 Architecture Programmer’s Manual Volume 3:
  General-Purpose and System Instructions Publication No. Revision Date}, 2012.

\bibitem{Thomas1977ABases}
Robert~H Thomas.
\newblock {A majority consensus approach to concurrency control for multiple
  copy data bases}.
\newblock Technical report, BOLT BERANEK AND NEWMAN INC CAMBRIDGE MA, 1977.

\bibitem{Thomson2012Calvin:Systems}
Alexander Thomson, Thaddeus Diamond, Shu-Chun Weng, Kun Ren, Philip Shao, and
  Daniel~J Abadi.
\newblock {Calvin: fast distributed transactions for partitioned database
  systems}.
\newblock In {\em Proceedings of the 2012 ACM SIGMOD International Conference
  on Management of Data}, pages 1--12, 2012.

\bibitem{Tu2013SpeedyDatabases}
Stephen Tu, Wenting Zheng, Eddie Kohler, Barbara Liskov, and Samuel Madden.
\newblock {Speedy transactions in multicore in-memory databases}.
\newblock In {\em Proceedings of the Twenty-Fourth ACM Symposium on Operating
  Systems Principles}, pages 18--32, 2013.

\bibitem{Wang2017EfficientlySerializable}
Tianzheng Wang, Ryan Johnson, Alan Fekete, and Ippokratis Pandis.
\newblock {Efficiently making (almost) any concurrency control mechanism
  serializable}.
\newblock {\em The VLDB Journal}, 26(4):537--562, 2017.

\bibitem{Weikum2001TransactionalRecovery}
Gerhard Weikum and Gottfried Vossen.
\newblock {\em {Transactional information systems: theory, algorithms, and the
  practice of concurrency control and recovery}}.
\newblock Elsevier, 2001.

\bibitem{Wu2017AnControl}
Yingjun Wu, Joy Arulraj, Jiexi Lin, Ran Xian, and Andrew Pavlo.
\newblock {An Empirical Evaluation of In-memory Multi-version Concurrency
  Control}.
\newblock {\em Proc. VLDB Endow.}, 10(7):781--792, 3 2017.

\bibitem{Yu2014StaringCores}
Xiangyao Yu, George Bezerra, Andrew Pavlo, Srinivas Devadas, and Michael
  Stonebraker.
\newblock {Staring into the abyss: An evaluation of concurrency control with
  one thousand cores}.
\newblock {\em Proceedings of the VLDB Endowment}, 8(3):209--220, 2014.

\bibitem{Yu2016Tictoc:Control}
Xiangyao Yu, Andrew Pavlo, Daniel Sanchez, and Srinivas Devadas.
\newblock {Tictoc: Time traveling optimistic concurrency control}.
\newblock In {\em Proceedings of the 2016 International Conference on
  Management of Data}, pages 1629--1642, 2016.

\bibitem{Yuan2016BCC:Databases}
Yuan Yuan, Kaibo Wang, Rubao Lee, Xiaoning Ding, Jing Xing, Spyros Blanas, and
  Xiaodong Zhang.
\newblock {BCC: reducing false aborts in optimistic concurrency control with
  low cost for in-memory databases}.
\newblock {\em Proceedings of the VLDB Endowment}, 9(6):504--515, 2016.

\bibitem{Zheng2014FastParallelism}
Wenting Zheng, Stephen Tu, Eddie Kohler, and Barbara Liskov.
\newblock {Fast Databases with Fast Durability and Recovery Through Multicore
  Parallelism}.
\newblock In {\em 11th USENIX Symposium on Operating Systems Design and
  Implementation (OSDI 14)}, pages 465--477, Broomfield, CO, 2014. USENIX
  Association.

\end{thebibliography}

\appendix
\section{Formal Proof}
\label{sec:proof}

We need the following lemmas to prove Theorem \ref{theo:NWR_is_correct}.
Each lemma corresponds to serializability, strict serializability, and recoverability, respectively.

\begin{lemma}[Serializable]
\label{theo:NWR_is_sr}
For a triple $(S, \ll, t_j)$ which satisfies NWR,
$S \cup \{c_j\}$ is serializable.
\end{lemma}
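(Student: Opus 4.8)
The plan is to show that $MVSG(S \cup \{c_j\}, \ll')$ is acyclic, which by the Bernstein et al.\ characterization is equivalent to $S \cup \{c_j\}$ being serializable (MVSR). First I would observe that SR-Rule states exactly $t_j \notin RN(t_j)$ on $MVSG(S \cup \{c_j\}, \ll')$, so there is no cycle through $t_j$. The remaining obligation is therefore to rule out cycles in $MVSG(S \cup \{c_j\}, \ll')$ that avoid $t_j$ entirely. For this I would compare $MVSG(S \cup \{c_j\}, \ll')$ with $MVSG(S, \ll)$, which is acyclic by hypothesis.

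The key step is to argue that deleting $t_j$ from $MVSG(S \cup \{c_j\}, \ll')$ yields exactly $MVSG(S, \ll)$ (equivalently, the two graphs are isomorphic after removing $t_j$, as claimed in Section~\ref{sec:control_flow}). I would establish this by examining each edge type. By PV-Rule, $\ll$ and $\ll'$ agree on the version order of every data item except possibly on versions in $ws_j$; more precisely, they differ only on pairs involving some $x_j \in ws_j$. Since $c_j, a_j \notin S$, the transaction $t_j$ contributes no committed versions to $S$, so $trans(CP(S))$ and $trans(CP(S \cup \{c_j\})) \setminus \{t_j\}$ coincide. Now consider an edge between two nodes $t_a, t_b \neq t_j$: a $\WR{}$ edge depends only on $S$ (which reads which version), so it is unaffected; a $\VORW{}$ or $\VOWW{}$ edge between $t_a$ and $t_b$ is determined by the relative order of two versions neither of which can be $x_j$ (since $t_j$ wrote those), and by PV-Rule that relative order is the same in $\ll$ and $\ll'$. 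Hence the induced subgraph on $trans(CP(S))$ is identical in both MVSGs. Since $MVSG(S, \ll)$ is acyclic, so is this induced subgraph, meaning every cycle in $MVSG(S \cup \{c_j\}, \ll')$ must pass through $t_j$ — but SR-Rule forbids that. Therefore $MVSG(S \cup \{c_j\}, \ll')$ is acyclic and $S \cup \{c_j\}$ is MVSR.

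I expect the main obstacle to be the careful bookkeeping in the edge-by-edge comparison: one must be precise about the fact that versions written by $t_j$ are new in $S \cup \{c_j\}$ and do not appear in $S$, so PV-Rule's exclusion "$x_i \ne x_k \ne x_j$" is exactly what is needed to transfer every old edge, while any genuinely new edge (one endpoint being $t_j$, or one involving version $x_j$) is incident to $t_j$ and hence handled by SR-Rule. A secondary subtlety is confirming that NV-Rule is not actually needed for serializability per se (it is used for strict serializability and non-visibility), so the argument should rely only on PV-Rule, SR-Rule, acyclicity of $MVSG(S,\ll)$, and the structural fact that $t_j$ is uncommitted in $S$.
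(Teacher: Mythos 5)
Your proposal is correct and follows essentially the same route as the paper's own (very terse) proof: PV-Rule guarantees that the part of the graph not involving $t_j$ coincides with the acyclic $MVSG(S,\ll)$, so any cycle in $MVSG(S\cup\{c_j\},\ll')$ would have to pass through $t_j$, which SR-Rule ($t_j\notin RN(t_j)$) forbids. Your edge-by-edge justification of the isomorphism claim and the observation that NV-Rule is not needed here are details the paper simply asserts or omits, so your writeup is if anything more complete than the original.
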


\begin{proof}
To prove by contradiction,
we assume that $MVSG(S \cup \{c_j\}, \ll^{NWR})$ for any $\ll^{NWR}$ is not acyclic.
By PV-Rule, $MVSG(S, \ll)$ and $MVSG(S, \ll^{NWR})$ are isomorphic.
To form the cycle, there exists some paths $t_j \to t_k \to ..., \to t_j$ for some $t_k$.
By SR-Rule, there exists no edge $t_k \to t_j$ for all $t_k \in RN(t_j)$.
It implies the contradiction.
\end{proof}




\textbf{Strict serializability.}
The definition of strict serializability is following \cite{Herlihy1990Linearizability:Objects}:
\begin{quote}
A history is serializable if it is equivalent to one in which transactions appear to execute sequentially, i.e., without interleaving. A (partial) precedence order can be defined on non-overlapping pairs of transactions in the obvious way. A history is strictly serializable if the transactions’ order in the sequential history is compatible with their precedence order. 
\end{quote}
As has been noted, a schedule $S$ is serializable iff there exists a version order $\ll$ such that $MVSG(S, \ll)$ is acyclic.
Strict serializability is the condition of the order $M$, a total order of transactions generated by topologically-sorting of acyclic MVSG.
We define the formal definition\footnotemark[3] of strict serializability for a schedule $S$ and a total order of transactions $M$ as following:
$\forall t_i \forall t_k \forall p \forall q(((p \in t_i \land q \in t_k) \land p <_S q) \Rightarrow t_i <_M t_k)$ where $<_M$ in $M$ and $<_S$ in $S$.

\begin{lemma}[Strictly serializable]
\label{theo:linearizable}
For a triple $(S, \ll, t_j)$ which satisfies NWR,
$S \cup \{c_j\}$ is strictly serializable.
\end{lemma}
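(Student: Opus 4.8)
The plan is to show that the total order $M$ obtained by topologically sorting $MVSG(S \cup \{c_j\}, \ll^{NWR})$ is compatible with the wall-clock precedence order of transactions in $S \cup \{c_j\}$, i.e., that it satisfies the formal definition
$\forall t_i \forall t_k \forall p \forall q(((p \in t_i \land q \in t_k) \land p <_S q) \Rightarrow t_i <_M t_k)$.
By Lemma \ref{theo:NWR_is_sr}, such an acyclic graph and hence such an $M$ exists. Since $S$ is already strictly serializable by hypothesis, the only new potential violations are those involving the freshly committed transaction $t_j$: a violation would be a pair of non-overlapping transactions whose precedence contradicts $M$, where at least one of them is $t_j$ (or where the insertion of $c_j$ and the switch from $\ll$ to $\ll^{NWR}$ disturbs an existing pair). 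I would split into two cases according to whether $t_j$ is the ``earlier'' or the ``later'' transaction of the offending pair.

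First I would handle the case where some $t_k$ precedes $t_j$ in wall-clock time (some operation of $t_k$ — in particular $c_k$ — finishes before some operation $p_j \in t_j$) yet $t_j <_M t_k$. For $M$ to place $t_j$ before $t_k$ while the graph is acyclic, there must be a path $t_j \to \cdots \to t_k$ in $MVSG(S \cup \{c_j\}, \ll^{NWR})$, i.e., $t_k \in RN(t_j)$. But then ST-Rule applies: it guarantees $\exists p_j \in t_j$ with $p_j <_S c_k$, meaning $t_j$ has an operation that precedes $c_k$, so $t_j$ and $t_k$ overlap — contradicting the assumption that they are non-overlapping with $t_k$ ending first. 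This is the crux of the argument and is exactly what ST-Rule was designed for.

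Second I would handle the case where $t_j$ precedes some $t_k$ in wall-clock time but $t_k <_M t_j$, and also the residual worry that changing $\ll$ to $\ll^{NWR}$ could break a pair not involving $t_j$ at all. For the latter, PV-Rule gives that $MVSG(S, \ll)$ and $MVSG(S, \ll^{NWR})$ are isomorphic on $trans(CP(S))$, so adding $t_j$ only adds $t_j$ and its incident edges; no edge among the old transactions changes, so no old compatible pair becomes incompatible, and strict serializability of $S$ carries over. For the former case, $t_k <_M t_j$ with the graph acyclic would require a path $t_j \to \cdots$ that does not reach $t_k$, while $t_k$ sits before $t_j$ in the topological order; since $t_k \notin RN(t_j)$ here, $t_k$ could in principle still precede $t_j$ in $M$ only if there is no path forcing the other direction — but by NV-Rule every $x_j \in ws_j$ is a non-latest version and $t_j$'s writes are omitted, so $t_j$ has no $\WR{}$ out-edges (Theorem \ref{theo:what_are_RN}, via RC-Rule) and cannot be ``needed early'' by a later reader; combined with the fact that $t_j$ is committed last in $S \cup \{c_j\}$, the topological sort may legitimately place $t_j$ at the end, which is compatible with $t_j$ preceding nothing — so there is no violation to rule out. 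I would phrase this last part carefully rather than hand-wave it.

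The main obstacle I anticipate is the bookkeeping in the second case: making precise that ``$t_j$ writes only omitted, non-latest versions'' really does prevent any transaction that is later in wall-clock time from being forced before $t_j$ in every valid topological order. The clean way is to note that omitting $w_j(x_j)$ leaves no footprint, so no transaction outside $RN(t_j)$ has any MVSG edge into or out of $t_j$ arising from $ws_j$; hence $t_j$'s only constraints come from $rs_j$ (the $\VORW{}$/$\WR{}$ in-edges and the $overwriters_j$ out-edges), all of which are already consistent with $S$'s strict serializability because the baseline protocol produced $S$ strictly serializably and RC-Rule plus the $overwriters_j$ validation preserve exactly those constraints. I would lean on Theorem \ref{theo:what_are_RN} to keep this tight.
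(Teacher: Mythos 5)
Your proposal is correct and follows essentially the same route as the paper's own proof: take the acyclic $MVSG(S\cup\{c_j\},\ll^{NWR})$ from the serializability lemma, topologically sort it (breaking ties by commit order), and derive a contradiction because $t_j <_M t_i$ forces a path $t_j \to \cdots \to t_i$, whence ST-Rule yields an operation of $t_j$ preceding $c_i$, contradicting non-overlap. Your additional bookkeeping (the reverse-direction case and pairs not involving $t_j$, handled via PV-Rule and the fact that $c_j$ is last) is a more explicit treatment of cases the paper leaves implicit, but it is the same argument.
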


\begin{proof}
By Lemma \ref{theo:NWR_is_sr}, $MVSG(S \cup \{c_j\}, \ll^{NWR})$ is acyclic for some $\ll^{NWR}$.
Let $M$ be a total order of $trans(CP(S \cup \{c_j\}))$ generated by commit-ordered first topological-sorting for MVSG.
To prove by contradiction, we assume that $S\cup \{c_j\}$ is not strictly serializable.
By assumption, there exists some $t_i$ in $trans(CP(S))$ such that (a) $\forall p \forall q(((p \in t_i \land q \in t_j) \land p <_S q)$ but (b) $t_j <_M t_i$.
(b) implies that MVSG includes the path $\{t_j \to ..., \to t_i\}$.
However, by ST-Rule, there exists some $q$ in $t_j$ such that $q <_S c_k$ for all reachable transactions $t_k$ in $RN(t_j)$.
That is, if (a) holds for some transaction $t_i$, (b) does not hold; there always exists $q$ in $t_j$ such that $q <_S c_i$.
It implies the contradiction.
\end{proof}

\footnotetext[3]{We define that the operation order in a given schedule reflects the wall-clock order.}

\textbf{Recoverability.}
The formal definition of recoverability is the following \cite{Weikum2001TransactionalRecovery}: 
$$\forall t_i \forall t_j ((t_i, t_j \in trans(S) \land t_i \ne t_j) \Rightarrow (r_j(x_i) \in t_j \Rightarrow c_i <_S c_j)$$
\begin{lemma}[Recoverable]
    \label{theo:recoverable}
For a triple $(S, \ll, t_j)$ which satisfies NWR,
$S \cup \{c_j\}$ is recoverable.
\end{lemma}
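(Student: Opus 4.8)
The plan is to argue by contradiction, exactly in the style of the proofs of Lemma~\ref{theo:NWR_is_sr} and Lemma~\ref{theo:linearizable}. Suppose $S \cup \{c_j\}$ is not recoverable. By the definition of recoverability there are distinct transactions $t_a, t_b \in trans(S \cup \{c_j\})$ with $r_b(x_a) \in t_b$, with $c_b \in S \cup \{c_j\}$, but with $c_a$ not preceding $c_b$ in $S \cup \{c_j\}$ (in particular this covers the case where $c_a$ is absent altogether). The first thing I would record is the bookkeeping observation that passing from $S$ to $S \cup \{c_j\}$ only appends the single operation $c_j$ at the very end: it creates and destroys no read, so the reads-from relation is unchanged, and the relative order of any two commits that already occur in $S$ is unchanged. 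Consequently any recoverability violation present in $S \cup \{c_j\}$ but not in $S$ must involve $t_j$, either as $t_b$ (the reader) or as $t_a$ (the writer) --- this is just the familiar fact, noted in the main text and due to Hadzilacos~\cite{Hadzilacos1988ASystems}, that recoverability is only ever broken by a committed transaction reading from a non-committed one.

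Then I would split into the two cases. If $t_b = t_j$, the offending read is $r_j(x_a)$ with $x_a \in rs_j$; RC-Rule gives $c_a \in S$, and since $c_j$ sits after every operation of $S$ we obtain $c_a <_{S \cup \{c_j\}} c_j = c_b$, contradicting the assumed violation. If instead $t_b \neq t_j$, then $c_b \in S$; were $t_a \neq t_j$ as well, the read $r_b(x_a)$ and both commits $c_a, c_b$ would already lie in $S$ with the same ordering, so $S$ itself would fail recoverability, contradicting the NWR premise that $S$ is recoverable. Hence $t_a = t_j$, i.e. a transaction that is committed in $S$ reads a version written by $t_j$; but $c_j \notin S$, so once more $S$ is not recoverable, a contradiction. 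In either case the assumption is untenable, so $S \cup \{c_j\}$ is recoverable.

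I expect this to be the shortest of the three lemmas, because recoverability is completely insensitive to the version-order machinery: $\ll$, $\ll'$, PV-Rule, SR-Rule, and ST-Rule play no role, and the only NWR clause actually invoked is RC-Rule, precisely to forbid $t_j$ from reading a version produced by an as-yet-uncommitted transaction. The one point that needs care --- and the step I would state most explicitly --- is the placement convention for $c_j$: the whole argument hinges on $S \cup \{c_j\}$ ordering $c_j$ after every operation of $S$ (so that $c_a \in S$ forces $c_a <_{S\cup\{c_j\}} c_j$, and so that no previously-valid commit ordering among transactions other than $t_j$ is disturbed). Beyond making that convention explicit, the case analysis itself is routine.
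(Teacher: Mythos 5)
Your proof is correct and follows essentially the same route as the paper's: argue by contradiction, observe that since $S$ is already recoverable the only possible new violation introduced by appending $c_j$ is $t_j$ reading from a not-yet-committed transaction, and rule that out with RC-Rule. The paper's own proof is a two-line sketch that collapses your explicit case analysis (reader $=t_j$ versus writer $=t_j$) into the single assertion that any violation would require some $x_i \in rs_j$ with $c_i \notin S$, so your version is just a more carefully spelled-out rendering of the same argument.
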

\begin{proof}
To prove by contradiction, we assume that $S\cup \{c_j\}$ is not recoverable.
Since $S$ is recoverable, there exists $x_i$ such that $c_i,a_i \notin S$ and $x_i$ in $rs_j$.
By RC-Rule, it does not hold.
\end{proof}

Finally, we now prove Theorem \ref{theo:NWR_is_correct}.
\begin{proof}
    \label{proof:NWR_is_correct}
    By Lemma \ref{theo:linearizable}, $S \cup \{c_j\}$ is strictly serializable.
    By Lemma \ref{theo:recoverable}, $S \cup \{c_j\}$ is recoverable.
\end{proof}

\balance
\end{document}